\newtheorem{proposition}{Proposition}[section]
\newenvironment{proof}{{\bf Proof\ }}{\QED\\}
\newtheorem{lemma}{Lemma}[section]
\numberwithin{equation}{section}
\newtheorem{theorem}{Theorem}[section]
\newcommand{\QED}{\hspace*{\fill}\rule{2.5mm}{2.5mm}}
\newcommand\qed{\hfill$\sqcap\kern-7.5pt\hbox{$\sqcup$}$}
\newcommand{\beqn}{\begin{equation}}
\newcommand{\eeqn}{\end{equation}}
\newcommand{\bear}{\begin{eqnarray}}
\newcommand{\eear}{\end{eqnarray}}
\newcommand{\bean}{\begin{eqnarray*}}
\newcommand{\eean}{\end{eqnarray*}}
\begin{document}
\title{On coupling kinetic and Schr\"odinger equations}

\author{Avy Soffer\footnotemark[1] \and Minh-Binh Tran\footnotemark[2] 
}

\renewcommand{\thefootnote}{\fnsymbol{footnote}}

\footnotetext[1]{Mathematics Department, Rutgers University, New Brunswick, NJ 08903 USA.\\Email: soffer@math.rutgers.edu
}

\footnotetext[2]{Department of Mathematics, University of Wisconsin-Madison, Madison, WI 53706, USA. \\Email: mtran23@wisc.edu
}

\maketitle
\begin{abstract} 
We consider in this paper a system coupling a linear quantum Boltzmann equation and a defocusing cubic nonlinear Schr\"odinger equation. The Schrodinger equation reflects the dynamics of the wave function of the Bose-Einstein Condensate and the kinetic part of the system describes the evolution of the density function of the thermal cloud. An existence and uniqueness result for the system is supplied. We also prove the convergence to equilibrium of the density function of the thermal cloud and a scattering theory for the wave function of the condensate.
 \end{abstract}

{\bf Keyword:}
{Low and high temperature quantum kinetics; Bose-Einstein  condensate; quantum Boltzmann equation;  defocusing cubic nonlinear Schrodinger equation; scattering theory; convergence to equilibrium. 

{\bf MSC:} {82C10, 82C22, 82C40.}

\tableofcontents
\section{Introduction}
When a bose gas  is cooled below the Bose-Einstein critical temperature, the
Bose-Einstein condensate is formed, consisting of a macroscopic number of particles, all in the ground state of the system. A finite temperature trapped Bose gas is composed of two distinct components, the Bose-Einstein Condensate and the noncondensate - thermal cloud.  Since the initial discoveries of Bose-Einstein Condensates (BECs) by the JILA and MIT groups, there has been much experimental and theoretical research on BECs and their thermal clouds (see \cite{semikoz1997condensation,kagan1997evolution,KD1,KD2,ZarembaNikuniGriffin:1999:DOT,proukakis2008finite,Spohn:2010:KOT,QK0,QK1,QK2,ReichlGust:2013:TTF,ReichlGust:2013:RRA,ReichlGust:2012:CII}, and references therein). The first model for the system of the interaction between BECs and their thermal clouds was introduced by Kirkpatrick and Dorfman in \cite{KD1,KD2}. By a simpler technique, the model was  revisited by Zaremba, Nikuni, Griffin in \cite{ZarembaNikuniGriffin:1999:DOT}. The terminology ``Quantum Kinetic Theory'' was first introduced by Gardinier, Zoller and collaborators in the series of papers \cite{QK0,QK1,QK2}. Gardinier and Zoller's Master Quantum Kinetic Equation, at the limit, returns to  the Kirkpatrick-Dorfman-Zaremba-Nikuni-Griffin (KDZNG) model. For more discussions and references on this topic, we refer to the review paper \cite{anglin2002bose} and the books \cite{inguscio1999bose,ColdAtoms1}.

Let us note that besides the kinetic theory point of view, there are other approaches to the study of BECs and excitations: the excitation spectrum \cite{seiringer2011excitation}, Fock space approach used to improve convergence rate in the analysis of Hepp, Rodnianski-Schlein \cite{GrillakisMachedon:2013:BMF,GrillakisMachedonMargetis:2011:SOC,GrillakisMachedon:2013:PEA}, Fock space approach central limit theorem \cite{KirkpatrickSchlein:2013:ACL}, Quasifree reduction \cite{bach2016time}, the time evolution of the one-particle wave function of an excitation \cite{DeckertFrohlichPickl:2016:DSW,mitrouskas2016bogoliubov}, and cited references. Quantum kinetic theory, on the other hand, is both a genuine kinetic theory and a genuine quantum theory. In which, the kinetic part arises from the decorrelation between different momentum bands.

During the last 10 years, quantum kinetic theory has also become as an important topic with a lot of interest (see \cite{LevermoreLiuPego:2016:GDB,Pego:2012:SSO,ChenGuo:2015:OTW,BenedettoCastella:2004:SCO,BenedettoPulvirenti:2005:OTW,Erdos:2004:OTQ,Carrillo:2016:TFP,EscobedoBinh,EscobedoVelazquez:2015:FTB,GambaSmithBinh,JinBinh,AlonsoGambaBinh,ToanBinh,ReichlTran,CraciunBinh,germain2017optimal,SofferBinh1,nguyen2017quantum} and references therein). According to the theory, the density function of the thermal cloud satisfies a quantum Boltzmann equation and the wave function of the  condensate follows the nonlinear Gross-Pitaevskii equation. The coupled dynamics of  the kinetic and Gross-Pitaevskii equations brings in a whole new class of phenomena.

In this  work, we are interested in the long time dynamics of the kinetics-Schrodinger coupling system. In order to explain the physical intuition behind our work, let us look at the classical example of surface waves on ocean. Wind blowing along the air-water interface is  what creates ocean surface waves. As wind continues to blow, it forms a steady disturbance on the surface, that leads to the rise of  the wave crests. Surface waves are the waves we see at  beaches and they occur  all over the globe. The coupling of the two states of matter gas - liquid  of this phenomenon is, in some sense, very similar to the coupling thermal cloud - Bose-Einstein Condensate. Suppose that the air above the ocean, after quite a long time, stands still and its density function reaches the equilibrium distribution. If this happens, we will not see ocean waves anymore, under the assumption that tidal waves,  tsunamis and other waves are negligible. That means the ocean wave function   is scattered into a constant function. This gives us an intuition for the long time dynamics of the system thermal cloud - Bose-Einstein Condensate: the thermal cloud would converge to equilibrium, as a normal gas; on the other hand, the wave function of the condensate would also converge to a constant function. In other words, we hope to prove  the convergence to equilibrium to the solution of the quantum Boltzmann equation, and the scattering theory for the solution of the nonlinear Schr\"odinger equation. 

With our current technologies, in order to study the scattering theory for the solution of the nonlinear Schr\"odinger equation, we will need to solve at least two problems:
\begin{itemize}
\item The nonhomogeneous quantum Boltzmann equation has a strong, unique global solution, since we will need to put this solution back into the Sch\"odinger equation to do the coupling.
\item The solution of the nonhomogeneous quantum Boltzmann equation converges to equilibrium with a sufficiently fast rate (exponentially), since the coupling behaves like a confining potential for the Sch\"odinger equation and we will want this potential to nicely behave. 
\end{itemize}

Unfortunately, both of these two problems  still remains opened, even in the context of the classical Boltzmann equation. Therefore, as the  first step to understand the long time dynamics of the kinetics-Sch\"odinger coupling, let us simplify the system by replacing the nonhomogeneous quantum Boltzmann equation by a linear quantum Boltzmann equation, and study the following coupling system, where $f(t,r,p)$ denotes the density function of the excitations at time $t$, position $r$ and momentum $p$ and $\Psi(t,r)$ is the wave function of the condensate at time $t$ and position $r$:
\begin{eqnarray}\label{System1}
\frac{\partial f}{\partial t}(t,r,p)+{p}\cdot\nabla_{{r}} f(t,r,p) & = &\ L[f](t,r,p), \\\nonumber
&&\ \ \ \ \ \ \ \ \ \ \ \ (t,r,p) \ \in \ \mathbb{R}_+\times\mathbb{R}^3\times\mathbb{R}^3,\\\label{System1b}
f(0,r,p)\ &=& \ f_0(r,p), (r,p)\in\mathbb{R}^3\times\mathbb{R}^3,\\\label{System2}
i  \frac{\partial \Psi(t,r)}{\partial t} \ &=& \ \Big(-{ \Delta_{{r}}} \ + \ |\Psi(t,r)|^2 + W(t,r) \Big)\Psi(t,r),\\
\Psi(0,r)&=&\Psi_0(r), \forall r\in\mathbb{R}^3,\\\label{System3}
\rho[f](t,r)&=&\int_{\mathbb{R}^3}f(t,r,p')dp',\\\label{System4}
\ N_c(t,r)&=&C^*\int_{\mathbb{R}^3}|\Psi|^2(t,r)e^{-|r-r'|}dr', \\\label{System5}
\ W(t,r)&=&-1-V(t,r), \ V(t,r) = - M_{f_0}+\rho(t,r),
\end{eqnarray}
where $L$ is of the form \eqref{Operator1} or \eqref{Operator2} and $\vartheta$ is some positive constant, $C^*_\vartheta$ is the normalization constant such that
$$C^*_\vartheta \int_{\mathbb{R}^3}e^{-|p|/\vartheta}dp=1.$$ 
For the sake of simplicity, let us set $\vartheta=1$ and denote $C^*_1$ as $C^*$.\\
 Moreover,
$$M_{f_0} \ = \ \int_{\mathbb{R}^3\times\mathbb{R}^3}f_0(r,p)drdp.$$
The Bose-Einstein distribution function is defined
\begin{equation}\label{BEDistribution}
\mathfrak{E}(p)=C_E \frac{1}{e^{\beta |p|}-1},
\end{equation} 
with $\beta :=\frac {1} {k_B T}>0$ is a given physical constant depending on the Boltzmann constant $k_B$, and the temperature of the quasiparticles $T$ at equilibrium. For the sake of simplicity, we suppose $\beta=1$.
The normalized constant $C_E$ is chosen such that
$$\int_{\mathbb{R}^3}\mathfrak{E}(p)dp=1.$$ 
We impose the following boundary condition on $\Psi$
\begin{equation}\label{GPBoundary}
\lim_{|r|\to\infty}\Psi=1.
\end{equation}
For more physical  background of the boundary condition \eqref{GPBoundary}, we refer to \cite{FetterSvidzinsky:2001:VIA,JonesRoberts:1982:MIA,JonesPuttermanRoberts:1986:MIA,BethuelSaut:VAS:2002,SulemSulem:TNS:1999} and references therein. }
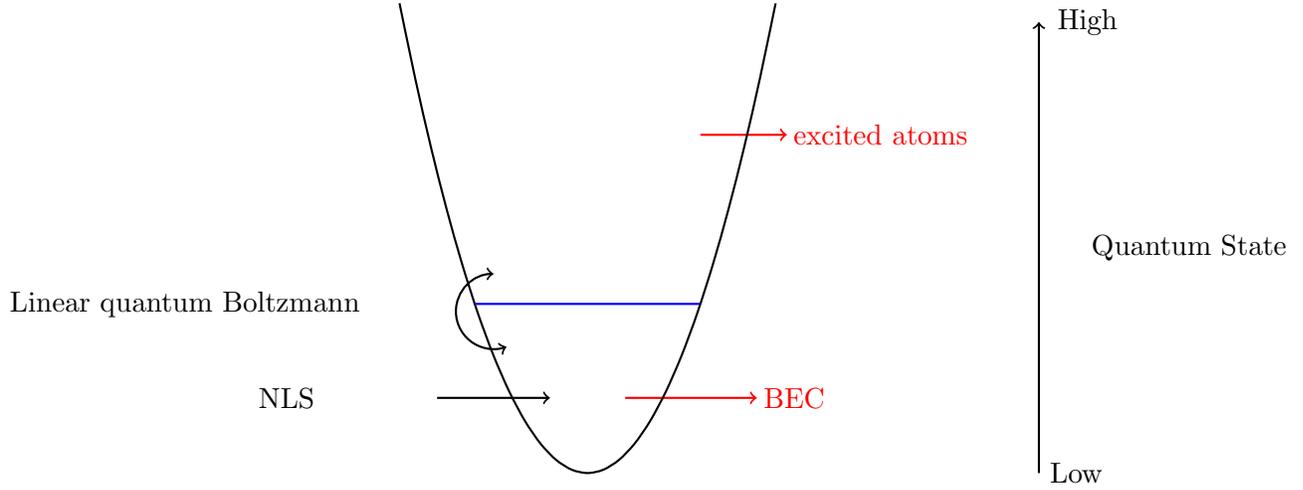
\begin{figure}

\begin{tikzpicture}
      \draw[thick, ->, red] (1.5,4.5) -- ++(1.15,0) node[xshift=1.25cm] {excited atoms};
      \draw[thick, ->] (-2,1) -- ++(1.5,0) node[xshift=-3.5cm] {NLS};
      \draw[thick, ->, red] (0.5,1) -- ++(1.75,0) node[xshift=.5cm] {BEC};
      \draw[thick, ->] (6,0) node [xshift=.5cm]{Low} -- ++(0,3)  node[xshift=2.0cm] {Quantum State} -- ++(0,3) node[xshift=.65cm] {High};
      \draw[ thick,domain=-2.5:2.5,smooth,variable=\x,black] plot ({\x},{\x*\x});
      \draw [thick, blue] (-1.5, 2.25) -- (1.5, 2.25);

      \draw[thick, <->] (-1.25,2.65) arc (90:290:.5);
      \node at (-5.35, 2.25) {Linear quantum Boltzmann};
    \end{tikzpicture}
    \label{fig2}\caption{The simplified model of the Bose-Einstein Condensate (BEC) and the excited atoms.}

\end{figure}
\\ We define 
$$\mathcal{L}:=\left\{f \ \Big| \ \|f\|_{\mathcal{L}}:=\left(\int_{\mathbb{R}^3\times\mathbb{R}^3}|f|^2\mathfrak{E}^{-1}drdp\right)^{1/2} \ < \ \infty\right\},$$
 denote the Lebesgue and the Sobolev 
spaces by $L^p$, $H^{s,p}$ respectively, for $1 \le p, q \le \infty$ and $s\in\mathbb{R}$.

Note that the  nonlinear model has already been considered in our previous works \cite{SofferBinh1,AlonsoGambaBinh}, in which the kinetic and Schrodinger equations are decoupled.

 Define 
\begin{equation}\label{KineticEquilibrium}
f_\infty (p) =M_{f_0} \mathfrak{E}(p),
\end{equation}
 The main Theorem of our paper is the following:
\begin{theorem}\label{Thm:Main}
Suppose that $f_0$ be a positive function in $L^1(\mathbb{R}^3\times\mathbb{R}^3)\cap \mathcal{L}$. There exists $\delta>0$ such that for $\|\nabla N_c(0,\cdot) \|_{L^\infty_r} \le \delta$ and if  
$\Psi_0\in H^1(\mathbb{R}^3)$ satisfies
\begin{equation}\label{Thm:Main:1}
\int_{\mathbb{R}^3}\langle r\rangle^2 \left(|\mathrm{Re}\Psi_0(r)|^2+|\nabla \Psi_0(r)|^2\right)dr<\delta^2,
\end{equation}
 under the assumption that $L$ is of the form \eqref{Operator2}, the System \eqref{System1}-\eqref{BEDistribution} has a unique solution $(f,\Psi)$. The first component $f\in C^1(\mathbb{R}_+,\mathcal{L})$, $f\geq 0$ and $f$ decays exponentially in time towards the equilibrium $f_\infty$  in the following sense: there exist  $\mathcal{C}_1,\mathcal{C}_2>0$ depending only on $\mathfrak{E}$, $\delta$, such that
\begin{equation}
\label{Thm:Main:2}
\|f(t)-f_\infty\|_{\mathcal{L}} \ + \ \|\nabla f(t)\|_{\mathcal{L}}\le \mathcal{C}_1e^{-\mathcal{C}_2t}. 
\end{equation}
Moreover, there also exists $\mathcal{C}_3>0$ depending only on $\mathfrak{E},\delta$ such that
\begin{equation}
\label{Thm:Main:3}
\|\rho[f](t)-M_{f_0}\|_{L^2_r(\mathbb{R}^3)}\le {\|f_0-f_\infty\|_{\mathcal{L}}}e^{-\mathcal{C}_3 t}.
\end{equation}
Define $U$ and $H$ as in \eqref{Sec:NLS:E6} and \eqref{Sec:NLS:E8}. The second component satisfies $\Psi=1+u$ and  for $v:=\mathrm{Re} u +iU\mathrm{Im}u$, we have $e^{itH}v\in C(\mathbb{R}; \langle r\rangle^{-1}H^1_r(\mathbb{R}^3))$. Moreover,
\begin{equation}\label{Thm:Main:4}
\|v(t)-e^{-itH}v_+\|_{H^1_r} \ \le O\left({{(t+1)}^{-1/2}}\right), \  \|\langle r\rangle \left[e^{itH}v(t)-v_+\right]\|_{H^1_r} \to 0,
\end{equation}
as $t\to 0$, for some $v_+\in \langle r\rangle^{-1} H^1_r(\mathbb{R}^3)$.\\
Define $u=u_{1}+ iu_{2}$, we also have
\begin{equation}\label{Thm:Main:5}
\|u_{1}(t)\|_{L^\infty}\leq O\left({{(t+1)}^{-1}}\right),~~~~\|u_{2}(t)\|_{L^\infty}\leq O\left({{(t+1)}^{-9/10}}\right).\end{equation}
\end{theorem}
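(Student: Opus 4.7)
The plan is to exploit the one-way structure of the coupling: because $L$ in \eqref{Operator2} acts only on $f$, the linear quantum Boltzmann equation can be solved independently, and the resulting density is fed into the Schr\"odinger equation through $V=-M_{f_0}+\rho[f]$, which we treat as a known time-dependent source.

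\emph{Kinetic step.} Since $L$ is a linear collision operator relaxing to $\mathfrak{E}$, standard $C_0$-semigroup theory on $\mathcal{L}$ produces a unique global $f\in C^1(\mathbb{R}_+;\mathcal{L})$ which inherits positivity from $f_0\ge 0$ and conserves mass $M_{f_0}$, so $f_\infty=M_{f_0}\mathfrak{E}$ is the correct equilibrium. For the exponential decay in \eqref{Thm:Main:2} I would use a hypocoercivity argument in the style of Villani and Dolbeault--Mouhot--Schmeiser: construct a norm equivalent to $\|\cdot\|_{\mathcal{L}}$ that combines the collision dissipation with a cross term produced by $p\cdot\nabla_{r}$, and verify a Gr\"onwall inequality. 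Because $L$ and $p\cdot\nabla_r$ commute with the $r$-gradient, the same functional applied to $\nabla_r f$ delivers the gradient part of \eqref{Thm:Main:2}. For \eqref{Thm:Main:3}, Cauchy--Schwarz with weight $\mathfrak{E}$ and the normalization $\int\mathfrak{E}\,dp=1$ give
\begin{equation*}
\big|\rho[f](t,r)-M_{f_0}\big|^{2}=\bigg|\int_{\mathbb{R}^3}(f-f_{\infty})\,dp\bigg|^{2}\le \int_{\mathbb{R}^3}|f-f_{\infty}|^{2}\mathfrak{E}^{-1}\,dp,
\end{equation*}
so integrating in $r$ yields $\|\rho[f](t)-M_{f_0}\|_{L^{2}_{r}}\le\|f(t)-f_{\infty}\|_{\mathcal{L}}$; the sharp constant $\|f_{0}-f_{\infty}\|_{\mathcal{L}}$ in \eqref{Thm:Main:3} then comes from the monotonicity of the hypocoercive functional, which starts at the value $\|f_0-f_\infty\|_{\mathcal{L}}$.

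\emph{Schr\"odinger step.} Write $\Psi=1+u$ with $u=u_1+iu_2$. Linearizing the cubic nonlinearity around $1$ gives, at leading order, the Bogoliubov system $\partial_t u_1=-\Delta u_2+O(u^2)+O(Vu)$ and $\partial_t u_2=(\Delta-2)u_1+V+O(u^2)+O(Vu)$, so the change of variables $v=u_1+iUu_2$ from \eqref{Sec:NLS:E6} diagonalizes the free dynamics into $i\partial_t v=Hv$ with $H$ the Bogoliubov operator of \eqref{Sec:NLS:E8}; the source term $V$ appears as a multiplicative perturbation $UV(1+u_1)$. Duhamel then gives
\begin{equation*}
v(t)=e^{-itH}v(0)-i\int_{0}^{t}e^{-i(t-s)H}\bigl(UV(1+u_1)+\mathcal{N}(u)\bigr)(s)\,ds,
\end{equation*}
where $\mathcal{N}(u)$ collects the genuine cubic nonlinearities. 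The kinetic step gives $\|V(t)\|_{H^{1}_{r}}\lesssim e^{-\mathcal{C}_2 t}$; hypothesis \eqref{Thm:Main:1} and smallness of $\|\nabla N_c(0,\cdot)\|_{L^\infty_r}$ supply the smallness of $v(0)$ in $\langle r\rangle^{-1}H^{1}_{r}$. I would close a contraction in a space combining $\langle r\rangle^{-1}H^{1}_{r}$ with suitable Strichartz norms for $e^{-itH}$, using the weighted dispersive estimates developed in the Gross--Pitaevskii scattering program of Gustafson--Nakanishi--Tsai. The time-integrated $V$-contribution is absorbed thanks to its exponential decay, producing a wave operator at $+\infty$ and yielding \eqref{Thm:Main:4}.

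\emph{Pointwise decay and main obstacle.} The asymmetric rates in \eqref{Thm:Main:5} are forced by the spectrum of $H$: at high frequencies $H\sim-\Delta$ gives the usual Schr\"odinger dispersion $(t+1)^{-1}$, recovered on $u_1$, while at low frequencies $H\sim\sqrt{2}|\nabla|$ is of phonon/wave type and, because $U$ is degenerate at zero frequency, only the slower $(t+1)^{-9/10}$ is obtainable on $u_{2}$ via stationary-phase estimates against weighted $L^{p}$ norms. The main obstacle is precisely the Schr\"odinger step: in contrast to pure GP scattering, we have a genuine exogenous time-dependent potential $V(t,r)$ whose spatial derivatives are not individually small at $t=0$, only the composite quantity $\|\nabla N_c(0,\cdot)\|_{L^\infty}$ is controlled. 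One must carefully trade the exponential time decay of $V$ inherited from the kinetic part against the Duhamel kernel of $e^{-itH}$ so that the delicate low-frequency behaviour of $v$, which drives the slower decay of $u_2$, survives the perturbation; this is where the smallness threshold $\delta$ enters.
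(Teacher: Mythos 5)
There is a genuine gap, and it is structural: you have misread the coupling as one-way. In the system actually being solved the collision operator carries the coefficient $N_c(t,r)=C^*\int|\Psi|^2(t,r')e^{-|r-r'|}dr'$ (see \eqref{System01} and \eqref{MainProof:E01}: the right-hand side is $N_c(t,r)\,\mathfrak{E}^{-1}\nabla_p\big(\mathfrak{E}^{-1}\nabla_p(f-f_\infty)\big)$, not the bare operator \eqref{Operator2}), so the kinetic equation depends on $\Psi$ just as the Schr\"odinger equation depends on $\rho[f]$. This is why the theorem's hypotheses involve $\|\nabla N_c(0,\cdot)\|_{L^\infty_r}$ at all, and why Propositions \ref{PropNoE:Kinetic} and \ref{Pro:Kinetic} assume upper and lower bounds on $N_c$. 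Consequently the kinetic equation cannot be "solved independently and fed into" the NLS: the paper's proof of the theorem consists precisely of composing the two solution maps $\mathcal{F}_1$ (kinetic, given $\Psi$) and $\mathcal{F}_2$ (NLS, given $f$) and producing a fixed point of the composition on a small ball around $\Psi\equiv 1$ (a compactness/Schauder-type argument), checking self-consistently that $\Psi$ stays close to $1$ so that $N_c$ remains bounded above and below and $\|\nabla N_c\|_{L^\infty_r}$ stays controlled via $\|\,|\Psi|^2-1\|_{L^1_r}$. Your proposal contains no closing argument of this kind, and uniqueness for the coupled system is likewise not addressed. A concrete casualty of the same misreading is your claim that $L$ commutes with $\nabla_r$, so that "the same functional applied to $\nabla_r f$" gives the gradient bound in \eqref{Thm:Main:2}: it does not commute, because of the factor $N_c(t,r)$; differentiating in $r$ produces the term $\partial_i N_c\, f$, and this is exactly where the prefactor $t\|\nabla N_c\|_{L^\infty_r}\|f_0\|_{\mathcal{L}}$ in \eqref{Prop:Kinetic:3} comes from.

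Two further points on the mechanisms you chose. For the exponential decay with $L=L_2$ the paper does not use hypocoercivity: testing with $f\mathfrak{E}^{-1}$ kills the transport term, and the momentum-space dissipation controls $\|f-f_\infty\|_{\mathcal{L}}$ directly through the weighted Poincar\'e inequality of Lemma \ref{Lemma:Poincare} (no mean-zero condition is needed because of the weight $\mathfrak{E}^{-1}$), since $L_2$ relaxes toward the global equilibrium $f_\infty=M_{f_0}\mathfrak{E}$ rather than toward a local Maxwellian. A Dolbeault--Mouhot--Schmeiser construction is what the paper uses for $L_1$, and on the whole space it delivers only the polynomial rate of Proposition \ref{PropNoE:Kinetic}, which the paper explicitly notes is too weak for the scattering step; so your hypocoercivity plan would not by itself produce the exponential rate you need. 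On the NLS side your Duhamel-plus-Strichartz sketch omits the normal form transformation $Z=v+b(u)$ of \eqref{Sec:NLS:E10}, which in the Gustafson--Nakanishi--Tsai framework (and in Proposition \ref{Pro:NLS}) is essential to handle the non-integrable quadratic interactions in three dimensions; invoking that program implicitly is acceptable in outline, but the fixed-point closure of the two-way coupling remains the missing core of the argument.
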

The proof of the theorem requires mixed techniques coming from both topics: kinetic
and Schr\"odinger equations. To be more precise, we propose a new framework, in which a
convergence to equilibrium technique is combined with a normal form transformation, to
study the long time asymptotics of the system.

Notice that in the above theorem, we choose $L$ to be of the form \eqref{Operator2}. We will see in Proposition \ref{PropNoE:Kinetic} that in the case $L$ is of the form \eqref{Operator1} we get a polynomial  decay in time of the convergence to equilibrium. This decay rate is to weak for the scattering theory of the Schrodinger equation to be true. On the other hand, in Proposition \ref{Pro:Kinetic}, when $L$ is of the form \eqref{Operator2}, the convergence rate to equilibrium is exponential in time.

The structure of the paper is as follows:  Section $2$ is devoted to the explication of how to obtain \eqref{System1}-\eqref{System5} from the quantum kinetic - Schrodinger system describing the dynamics of a BEC and its thermal cloud (cf. \cite{KD1,KD2,ZarembaNikuniGriffin:1999:DOT,GriffinNikuniZaremba:2009:BCG,QK0,QK1,QK2,QK3,QK4,QK5,QK6,ReichlGust:2013:TTF,ReichlGust:2013:RRA,ReichlGust:2012:CII}). In Propositions \ref{PropNoE:Kinetic} and  \ref{Pro:Kinetic}, we provide  the existence, uniqueness and convergence to equilibrium results of the linear quantum Boltzmann equation, for two different choices of the operator $L$: \eqref{Operator1} and \eqref{Operator2}. Proposition \ref{Pro:NLS} discusses existence and uniqueness results  for the nonlinear Schrodinger equation as well as the scattering theory for the equation. Based on Propositions   \ref{Pro:Kinetic} and \ref{Pro:NLS}, the proof of Theorem \ref{Thm:Main} is supplied in Section 4.
\section{The simplified model on the coupling between Schrodinger and kinetic equations}
In this section, we explain how to obtain the System \eqref{System1}-\eqref{System5} from the  quantum kinetic - Schrodinger system describing the dynamics of a BEC and its thermal cloud (cf. \cite{KD1,KD2,ZarembaNikuniGriffin:1999:DOT,GriffinNikuniZaremba:2009:BCG,QK0,QK1,QK2,QK3,QK4,QK5,QK6,ReichlGust:2013:TTF,ReichlGust:2013:RRA,ReichlGust:2012:CII}).
First, recall the BEC-thermal cloud system, at moderately low temperature regime:
\begin{eqnarray}\label{QBFull}
\frac{\partial f}{\partial t}(t,r,p)&+&{p}\cdot\nabla_{{r}} f(t,r,p)\\\nonumber
&=&Q[f](t,r,p):=n_c(t,r)C_{12}[f](t,r,p)+C_{22}[f](t,r,p), (t,r,p)\in\mathbb{R}_+\times\mathbb{R}^3\times\mathbb{R}^3,\\\nonumber
f(0,r,p)&=&f_0(r,p), (r,p)\in\mathbb{R}^3\times\mathbb{R}^3,\\\nonumber
C_{12}[f](t,r,p_1)&:=&{\frac{2g^2}{(2\pi)^2\hbar^4}}\iint_{\mathbb{R}^{3}\times\mathbb{R}^{3}}\delta({p}_1-{p}_2-{p}_3)\delta(\mathcal{E}_{{p}_1}-\mathcal{E}_{{p}_2}-\mathcal{E}_{{p}_3})\\\label{C12}
& &\times[(1+f(t,r,{p}_1))f(t,r,{p}_2)f(t,r,{p}_3)-\\\nonumber
&&-f(t,r,{p}_1)(1+f(t,r,{p}_2))(1+f(t,r,{p}_3))]d{p}_2d{p}_3\\\nonumber
&&-2{{\frac{2g^2}{(2\pi)^2\hbar^4}}}\iint_{\mathbb{R}^{3}\times\mathbb{R}^{3}}\delta({p}_2-{p}_1-{p}_3)\delta(\mathcal{E}_{{p}_2}-\mathcal{E}_{{p}_1}-\mathcal{E}_{{p}_3})\\\nonumber
& &\times[(1+f(t,r,{p}_2))f(t,r,{p}_1)f(t,r,{p}_3)-\\\nonumber
&&-f(t,r,{p}_2)(1+f(t,r,{p}_1))(1+f(t,r,{p}_3))]d{p}_2d{p}_3,\\\label{C22}
C_{22}[f](t,r,p_1)&:=&\frac{2g^2}{(2\pi)^5\hbar^7}\iiint_{\mathbb{R}^{3}\times\mathbb{R}^{3}\times\mathbb{R}^{3}}\delta({p}_1+{p}_2-{p}_3-{p}_4)\\\nonumber
& &\times\delta(\mathcal{E}_{{p}_1}+\mathcal{E}_{{p}_2}-\mathcal{E}_{{p}_3}-\mathcal{E}_{{p}_4})\times\\\nonumber
&&\times [(1+f(t,r,{p}_1))(1+f(t,r,{p}_2))f(t,r,{p}_3)f(t,r,{p}_4)\\\nonumber
&&-f(t,r,{p}_1)f(t,r,{p}_2)(1+f(t,r,{p}_3))(1+f(t,r,{p}_4))]d{p}_2d{p}_3d{p}_4,
\end{eqnarray}
where  $n_c(t,r)=|\Phi|^2(t,r)$ is the condensate density, $\Phi$ satisfies 
\begin{equation}
\begin{aligned}\label{GP}
i \hbar \frac{\partial \Phi(t,r)}{\partial t}=&\ \Big(-\frac{\hbar \Delta_{{r}}}{2m}+g|\Phi(t,r)|^2+2g\int_{\mathbb{R}^3}fdp+\frac{ig^2}{2\hbar}\iiint_{\mathbb{R}^{3}\times\mathbb{R}^{3}\times\mathbb{R}^{3}}\delta({p}_1-{p}_2-{p}_3)\\
&\times\delta(\mathcal{E}_{{p}_1}-\mathcal{E}_{{p}_2}-\mathcal{E}_{{p}_3})[(1+f(t,r,{p}_1))f(t,r,{p}_2)f(t,r,{p}_3)-\\
&-f(t,r,{p}_1)(1+f(t,r,{p}_2))(1+f(t,r,{p}_3))]dp_1d{p}_2d{p}_3\Big)\Phi(t,r), \ \ (t,r)\in\mathbb{R}_+\times\mathbb{R}^3,\\
~~~\Phi(0,r)=& \ \Phi_0(r), \forall r\in\mathbb{R}^3,
\end{aligned}
\end{equation}

 \begin{figure}

\begin{tikzpicture}
      \draw[thick, ->] (-2.5,4.5) -- ++(1,0) node[xshift=-1.5cm] {$C_{22}$};
      \draw[thick, ->, red] (1.5,4.5) -- ++(1.15,0) node[xshift=1.25cm] {excited atoms};
      \draw[thick, ->] (-2,1) -- ++(1.5,0) node[xshift=-2.25cm] {NLS};
      \draw[thick, ->, red] (0.5,1) -- ++(1.75,0) node[xshift=.5cm] {BEC};
      \draw[thick, ->] (6,0) node [xshift=.5cm]{Low} -- ++(0,3)  node[xshift=2.0cm] {Quantum State} -- ++(0,3) node[xshift=.65cm] {High};
      \draw[ thick,domain=-2.5:2.5,smooth,variable=\x,black] plot ({\x},{\x*\x});
      \draw [thick, blue] (-1.5, 2.25) -- (1.5, 2.25);

      \draw[thick, <->] (-1.25,2.65) arc (90:290:.5);
      \node at (-2.15, 2.25) {$C_{12}$};
    \end{tikzpicture}
    \label{fig}\caption{The Bose-Einstein Condensate (BEC) and the excited atoms.}
    
   \end{figure}
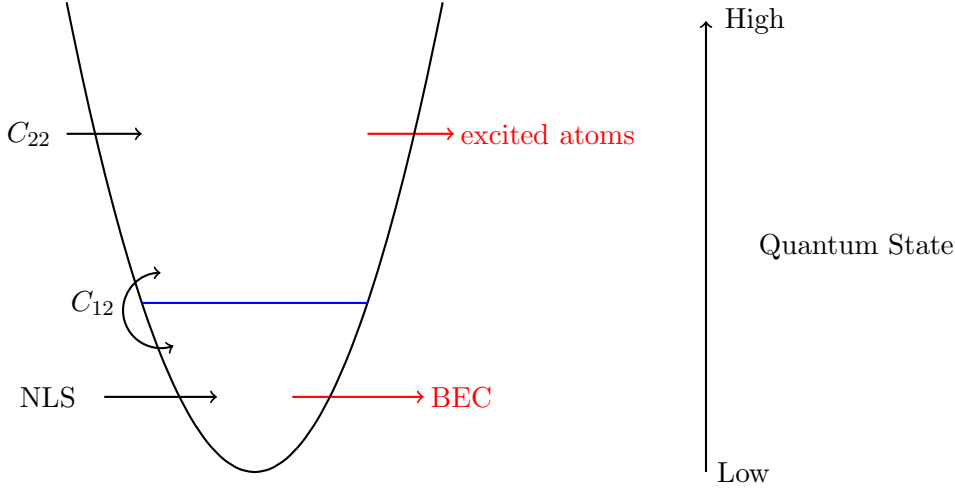
 and $\mathcal{E}_{{p}}$ is the Bogoliubov dispersion law
\bear \label{def-E}
\mathcal{E}_p\ =\ \mathcal{E}(p)\ =\ \sqrt{\kappa_1 |p|^2 + \kappa_2 |p|^4}, \qquad \kappa_1\ =\ \frac{gn_c}{m}>0, \quad \kappa_2\ =\ \frac{1}{4m^2}>0,
\eear
 $m$ is the mass of the particles,  $g$ is the interaction coupling constant.
\\ Notice that \eqref{C12} describes collisions of the condensate and the non-condensate atoms (condensate growth term), \eqref{C22} describes collisions between non-condensate atoms, and \eqref{GP} is the defocusing nonlinear Schr\"odinger equation of the condensate. \\
We assume that the temperature of the system is low enough, such that  collisions of the condensate and the non-condensate atoms are much stronger than the collisions between non-condensate atoms,  $C_{22}$ is therefore negligible. The BEC-thermal cloud system is reduced to
\begin{equation}\label{QBC12}
\begin{aligned}
\frac{\partial f}{\partial t}(t,r,p)\ + & \ {p}\cdot\nabla_{{r}} f(t,r,p) \\
= & \ Q[f](t,r,p):=n_c(t,r)C_{12}[f](t,r,p), (t,r,p)\in\mathbb{R}_+\times\mathbb{R}^3\times\mathbb{R}^3,\\
f(0,r,p)\ =& \ f_0(r,p), (r,p)\in\mathbb{R}^3\times\mathbb{R}^3,
\end{aligned}
\end{equation}
where  $n_c=|\Phi|^2$ is the condensate density, $\Phi$ satisfies \eqref{GP}.
\\
Notice that $\Phi(t,r)$ is usually a function in $H^1_r(\mathbb{R}^3))$, it is not easy to evaluate the value of it at each point $(t,r)$. We therefore replace $n_c$ by the average  $N_c$
$$N_c(t,r)=C^*\int_{\mathbb{R}^3}|\Phi|^2(t,r)e^{-|r-r'|^2 }dr',$$
where $\vartheta$ is some positive constant and $C^*$ is the normalized constant
$$C^*\int_{\mathbb{R}^3}e^{-|r'|^2 }dr'=1.$$
Since the coupling between the equation involving $C_{12}[f]$ and the cubic nonlinear Schr\"odinger equation is difficult to study, let us replace $C_ {12}[f]$ by the linear quantum Boltzmann operator $L[f]$:
\begin{eqnarray}\label{LB}
\frac{\partial f}{\partial t}(t,r,p) \ + \ {p}\cdot\nabla_{{r}} f(t,r,p) 
&=&N_c(t,r)L[f](t,r,p), \mbox{ on } \mathbb{R}_+\times\mathbb{R}^3\times\mathbb{R}^3,\\
f(0,r,p)&=&f_0(r,p), (r,p)\in\mathbb{R}^3\times\mathbb{R}^3,
\end{eqnarray}
where 
$L[f]$ could be either $L_1[f]$ or $L_2[f]$
\begin{equation}\label{Operator1}
L_1[f](r,p) \ =  \ \mathfrak{E}(p)\int_{\mathbb{R}^3}f(r,p')dp'-f(r,p)
\end{equation}
and
\begin{equation}\label{Operator2}
L_2[f](r,p) \ =  \ \mathfrak{E}^{-1}(p)\nabla_p\left(\mathfrak{E}^{-1}(p) \nabla_p (f-f_\infty)\right),
\end{equation}
and $\Phi$ satisfies \eqref{GP1}:
\begin{equation}
\begin{aligned}\label{GP1}
i \hbar \frac{\partial \Phi(t,r)}{\partial t}\ =&\ \Big(-\frac{\hbar \Delta_{{r}}}{2m}\ +\ g|\Phi(t,r)|^2\ +\ 2g\int_{\mathbb{R}^3}fdp\Big)\Phi(t,r),\\
~~~\Phi(0,r)\ =&\ \Phi_0(r), \forall r\in\mathbb{R}^3.
\end{aligned}
\end{equation}
In the above equation, we have dropped the term containing the collision operator

\begin{equation}
\begin{aligned}\label{GP2}
&\ \frac{ig^2}{2\hbar}\iiint_{\mathbb{R}^{3}\times\mathbb{R}^{3}\times\mathbb{R}^{3}}\delta({p}_1-{p}_2-{p}_3)\delta(\mathcal{E}_{{p}_1}-\mathcal{E}_{{p}_2}-\mathcal{E}_{{p}_3})[(1+f(t,r,{p}_1))f(t,r,{p}_2)f(t,r,{p}_3)-\\
&-f(t,r,{p}_1)(1+f(t,r,{p}_2))(1+f(t,r,{p}_3))]dp_1d{p}_2d{p}_3\Phi(t,r).
\end{aligned}
\end{equation}
This is done based on the fact that $g$ is also the principle small parameter used in the derivation of the system. The derivation starts with the usual Heisenberg equation of motion for the quantum field
operator. Then, by averaging the Heisenberg equation with respect to a broken-symmetry nonequilibrium ensemble, the Gross-Pitaevski equation for the condensate wavefunction follows. The quantum Boltzmann equation is derived by taking the difference between the Heisenberg equation and the  equation for the condensate wavefunction and keeping only the terms of low orders with respect to $g$. Therefore, in \eqref{GP}, one can drop \eqref{GP2} since it is of second order in $g$. 

To simplify the notations, let us omit $\hbar$, $m$ and $g$  and study the following kinetic-Schr\"odinger system
\begin{eqnarray}\label{System01}
\frac{\partial f}{\partial t}(t,r,p)+{p}\cdot\nabla_{{r}} f(t,r,p)  
& = & L[f] \ :=\ N_c(t,r)\left[\mathfrak{E}(p)\rho[f](t,r) \ - \ f(t,r,p)\right],\\\nonumber
&&\ \ \  \mbox{ on } \mathbb{R}_+\times\mathbb{R}^3\times\mathbb{R}^3,\\\nonumber
f(0,r,p)\ &=& \ f_0(r,p), (r,p)\in\mathbb{R}^3\times\mathbb{R}^3,\\\label{System02}
i  \frac{\partial \Phi(t,r)}{\partial t} \ &=& \ \Big(-{ \Delta_{{r}}}  +  |\Phi(t,r)|^2+\rho(t,r)\Big)\Phi(t,r),\\\nonumber
& &\ \ \ \ \ \ \ \ \ \mbox{ on }\mathbb{R}_+\times\mathbb{R}^3,\\\nonumber
\Phi(0,r)&=&\Phi_0(r), \forall r\in\mathbb{R}^3,\\\label{System03}
\rho[f](t,r)&=&\int_{\mathbb{R}^3}f(t,r,p')dp'.
\end{eqnarray}
Now, by putting $\Phi=e^{-i(1+M_{f_0})t}\Psi$, we obtain the  system \eqref{System1}-\eqref{System5}. Note that the reduced system \eqref{System1}-\eqref{System5} has the following conservation of mass:
\begin{equation}
\int_{\mathbb{R}^3}f(t,r,p)drdp \ = \ \int_{\mathbb{R}^3}f_0(r,p)drdp,
\end{equation}
and
\begin{equation}
\int_{\mathbb{R}^3}|\Psi(t,r)|^2drdp \ = \ \int_{\mathbb{R}^3}|\Psi_0(r)|^2drdp.
\end{equation}
  \section{The linear quantum Boltzmann equation}\label{Sec:Kinetic}
Let us consider the kinetic equation \eqref{System1}, with $N_c(t,r)$ being a given coefficient. We have that
$$\frac{d}{dt}\int_{\mathbb{R}^3\times\mathbb{R}^3}f(t,r,p)drdp =  0,$$
which implies
$$\int_{\mathbb{R}^3\times\mathbb{R}^3}f(t,r,p)drdp=\int_{\mathbb{R}^3\times\mathbb{R}^3}f(0,r,p)drdp = M_{f_0}, ~~~\forall t\in\mathbb{R}_+.$$
In the  following two subsections, we will consider two different scenarios of $L$: \eqref{Operator1} and \eqref{Operator2}. We will see that for the first case, the convergence rate to equilibrium is polynomial and for the second case, it is exponential.
\subsection{The decay rates when $L=L_1$}
We first observe that the following identities hold true
\begin{eqnarray}\label{IdentityKinetic1}
&& \frac{d}{dt}\int_{\mathbb{R}^3\times\mathbb{R}^3}|f(t,r,p)|^2\mathfrak{E}^{-1}(p)drdp\\\nonumber
& = & -\frac{1}{2}\int_{\mathbb{R}^3\times\mathbb{R}^3} N_c(t,r)\left[\mathfrak{E}(p)\int_{\mathbb{R}^3}f(t,r,p')dp'-f(t,r,p)\right]^2\mathfrak{E}^{-1}(p)drdp\\\nonumber
& = & -\frac{1}{2}\int_{\mathbb{R}^3\times\mathbb{R}^3\times\mathbb{R}^3} N_c(t,r)\left[\frac{f(t,r,p)}{\mathfrak{E}(p)}-\frac{f(t,r,p')}{\mathfrak{E}(p')}\right]^2\mathfrak{E}(p')\mathfrak{E}(p)dp'dpdr,
\end{eqnarray}
and
\begin{eqnarray}\label{IdentityKinetic2}
&&\frac{d}{dt}\int_{\mathbb{R}^3\times\mathbb{R}^3}|f(t,r,p)-f_{\infty}(p)|^2\mathfrak{E}^{-1}(p)drdp\\\nonumber
 & = &-\frac{1}{2}\int_{\mathbb{R}^3\times\mathbb{R}^3} N_c(t,r)\left[\mathfrak{E}(p)\int_{\mathbb{R}^3}(f(t,r,p')-f_\infty(p'))dp'-(f(t,r,p)-f_\infty(p))\right]^2\mathfrak{E}^{-1}(p)drdp\\\nonumber
& = & -\frac{1}{2}\int_{\mathbb{R}^3\times\mathbb{R}^3\times\mathbb{R}^3} N_c(t,r)\left[\frac{f(t,r,p)-f_{\infty}(p)}{\mathfrak{E}(p)}-\frac{f(t,r,p')-f_{\infty}(p')}{\mathfrak{E}(p')}\right]^2\mathfrak{E}(p')\mathfrak{E}(p)dp'dpdr. 
\end{eqnarray}
\begin{proposition}\label{PropNoE:Kinetic} Suppose that $f_0$ be a positive function in $L^1(\mathbb{R}^3\times\mathbb{R}^3)\cap \mathcal{L}$ and $N_c(t,r)$ is bounded from above by $\overline{C}_N>0$ and from below by $\underline{C}_N>0$. Under the assumption that $L=L_1$, Equation \eqref{System1}-\eqref{System1b} has a unique positive solution $f$, which decays polynomially in time towards the equilibrium $f_\infty$  in the following sense: there exists $\mathfrak{C}_1>0$  depending on $\|f_0-f_\infty\|_{\mathcal{L}},\underline{C}_N,\overline{C}_N$, such that
\begin{equation}
\label{PropNoE:Kinetic:1}
\|f(t)-f_\infty\|_{\mathcal{L}}\le \frac{\mathfrak{C}_1\big(\|f_0-f_\infty\|_{\mathcal{L}},\underline{C}_N,\overline{C}_N\big)}{\sqrt{1+t}}. 
\end{equation}
Moreover, there also exists $\mathfrak{C}_2>0$  depending on $\|f_0-f_\infty\|_{\mathcal{L}},\underline{C}_N,\overline{C}_N$, such that
\begin{equation}
\label{PropNoE:Kinetic:2}
\|\rho[f](t)-M_{f_0}\|_{L^2_r(\mathbb{R}^3)}\le \frac{\mathfrak{C}_2\big(\|f_0-f_\infty\|_{\mathcal{L}},\underline{C}_N,\overline{C}_N\big)}{\sqrt{1+t}}. 
\end{equation}
\end{proposition}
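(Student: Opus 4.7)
The plan is to organize the argument into three stages: establishing existence, uniqueness, and positivity via Duhamel; rewriting the dissipation identity in variance form; and then closing a nonlinear decay estimate using the continuity equation for the density. Since the equation is linear in $f$ with coupling only through $\rho[f]$, I would write the Duhamel formula along the characteristics of $\partial_t+p\cdot\nabla_r$ with absorption coefficient $N_c$:
\[
f(t,r,p)=E(t,0)f_0(r-tp,p)+\int_0^t E(t,s)\,N_c(s,r-(t-s)p)\,\mathfrak{E}(p)\,\rho[f](s,r-(t-s)p)\,ds,
\]
with $E(t,s):=\exp\bigl(-\int_s^t N_c(\tau,r-(t-\tau)p)\,d\tau\bigr)\in[e^{-\overline{C}_N(t-s)},1]$. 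Integrating over $p$ yields a linear Volterra equation for $\rho$ whose kernel is controlled by $\overline{C}_N$ and $\int\mathfrak{E}\,dp=1$; a Banach fixed-point argument in $C([0,T];L^2_r)$, iterated globally, supplies the unique mild solution. Positivity of $f$ is immediate from $f_0,N_c,\mathfrak{E}\ge 0$.

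Next, setting $\phi(t):=\|f-f_\infty\|_{\mathcal L}^2$ and $g:=(f-f_\infty)/\mathfrak{E}$, I would expand the square in \eqref{IdentityKinetic2} and use $\int\mathfrak{E}\,dp=1$ to recognise the integrand as the variance of $g$ under the probability measure $\mathfrak{E}(p)dp$:
\[
\phi'(t)=-\int_{\mathbb{R}^3}N_c(t,r)\Bigl[\int_{\mathbb{R}^3}g^2\mathfrak{E}\,dp-\Bigl(\int_{\mathbb{R}^3}g\mathfrak{E}\,dp\Bigr)^2\Bigr]dr.
\]
Since $\int g\mathfrak{E}\,dp=\rho-M_{f_0}$ and $\int g^2\mathfrak{E}\,dp=\int(f-f_\infty)^2\mathfrak{E}^{-1}dp$, the lower bound $N_c\ge\underline{C}_N$ gives
\[
\phi'(t)\le-\underline{C}_N\bigl[\phi(t)-\|\rho[f]-M_{f_0}\|_{L^2_r}^2\bigr].
\]
This estimate controls $f-f_\infty$ only modulo the one-dimensional kernel direction $\mathbb{R}\mathfrak{E}$ of $L_1$, which is precisely what prevents exponential decay.

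To extract a polynomial rate, I would use $\int L_1[f]\,dp=0$ to obtain the continuity equation $\partial_t(\rho-M_{f_0})+\nabla_r\cdot j=0$ with $j=\int p\,(f-f_\infty)\,dp=\int p\,h\,dp$, where $h:=f-f_\infty-(\rho-M_{f_0})\mathfrak{E}$ (the part orthogonal to $\mathfrak{E}$ in the $\mathfrak{E}^{-1}$-pairing). Cauchy--Schwarz in $p$ then yields $\|j\|_{L^2_r}^2\le\bigl(\int|p|^2\mathfrak{E}\,dp\bigr)\bigl(\phi(t)-\|\rho-M_{f_0}\|_{L^2_r}^2\bigr)$. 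A hypocoercivity-type Lyapunov functional combining $\phi$ with a spatial primitive of $(\rho-M_{f_0})$, together with the conserved $L^1$ bound $\|f(t)-f_\infty\|_{L^1_{r,p}}\le\|f_0-f_\infty\|_{L^1_{r,p}}$ (from $L^1$-dissipativity of $L_1$) and an interpolation between the $\mathcal L$ and $L^1$ norms, should produce a nonlinear differential inequality $\phi'(t)\le -c\,\phi(t)^2$, whose integration gives $\phi(t)\le \mathfrak{C}_1^2/(1+t)$, i.e.~\eqref{PropNoE:Kinetic:1}. Estimate \eqref{PropNoE:Kinetic:2} then follows from the already-established $\|\rho-M_{f_0}\|_{L^2_r}^2\le\phi(t)$.

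The main obstacle is that $L_1$ has no spectral gap on $\mathbb{R}^3$: its kernel at every $r$ is the one-dimensional subspace $\mathbb{R}\mathfrak{E}$, and the spatial direction is unconfined, so exponential relaxation is genuinely unavailable. Carrying out the hypocoercivity twist and the interpolation step under only the stated $L^1\cap\mathcal L$ regularity, with the $r$-dependent coefficient $N_c$ and with constants depending solely on $\|f_0-f_\infty\|_{\mathcal L},\underline{C}_N,\overline{C}_N$, is the principal technical point and the reason the rate is polynomial rather than exponential.
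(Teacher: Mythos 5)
Your outline follows essentially the same strategy as the paper: a mild/Duhamel formulation for existence, uniqueness and positivity; the dissipation identity written in variance form; and then a Dolbeault--Mouhot--Schmeiser-type augmented functional (the paper builds it in Fourier variables, adding $\delta\,\mathrm{Re}\int R_\zeta[\hat f]\overline{\hat f}\mathfrak{E}^{-1}dp$ with $R_\zeta[\hat f]=\frac{-i\zeta}{1+|\zeta|^2}\rho(p\hat f)\mathfrak{E}$, which is exactly your ``primitive of $\rho-M_{f_0}$ paired with the current $j$'') combined with the conserved $L^1$ mass to force a Riccati inequality and hence the $t^{-1/2}$ rate. So the plan is right. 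The problem is that the decisive quantitative steps are only asserted: the sentence ``should produce a nonlinear differential inequality $\phi'\le -c\,\phi^2$'' is precisely the heart of the proof, and nothing in your text carries it out. Concretely, you never estimate the time derivative of the cross term (the paper's $I_1,\dots,I_4$ computations, including the cancellations from $\int p\,\mathfrak{E}\,dp=0$ and $L[\mathfrak{E}\rho]=0$, and the choice of $\delta,\epsilon$ making the augmented functional both equivalent to $\phi$ and dissipative of the full frequency-weighted norm $\int\frac{|\zeta|^2}{1+|\zeta|^2}|\hat f|^2\mathfrak{E}^{-1}$); also note the Riccati inequality holds for the modified functional, not for the plain $\phi$, and one must pass back through the norm equivalence.

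The second, more substantive gap is the interpolation step itself. To convert the degenerate dissipation into $-c\,\phi^2$ one uses a Nash-type Cauchy--Schwarz,
\begin{equation*}
\Big(\int|\hat f|^2\mathfrak{E}^{-1}\Big)^2\le\Big(\int\tfrac{|\zeta|^2}{1+|\zeta|^2}|\hat f|^2\mathfrak{E}^{-1}\Big)\Big(\int\tfrac{1+|\zeta|^2}{|\zeta|^2}|\hat f|^2\mathfrak{E}^{-1}\Big),
\end{equation*}
and must show the second factor is bounded \emph{uniformly in time}. Its low-frequency part is controlled by $\sup_{|\zeta|\le1}|\hat f(\zeta,p)|\le G(t,p):=\int_{\mathbb{R}^3}|f|\,dr$, so what is actually needed is a uniform bound on $\int G^2\mathfrak{E}^{-1}dp$ (an $L^1_r L^2_p(\mathfrak{E}^{-1})$-type mixed norm), and this does \emph{not} follow from the conserved $L^1_{r,p}$ mass or from any soft ``interpolation between the $\mathcal L$ and $L^1$ norms''; the paper obtains it by deriving a separate differential inequality $\partial_t G+\underline{C}_N G\le M_{|f_0|}\overline{C}_N\mathfrak{E}(p)$ from the sign estimate and closing a Gronwall argument using both $\underline{C}_N$ and $\overline{C}_N$. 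Until you supply (i) the cross-term estimates with the smallness bookkeeping and (ii) this uniform low-frequency bound, the claimed inequality $\phi'\le-c\,\phi^2$, and with it both \eqref{PropNoE:Kinetic:1} and \eqref{PropNoE:Kinetic:2}, remain unproved.
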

\begin{proof}
The existence and uniqueness result of the equation \eqref{System1} is classical  due to the same argument used in (cf. \cite{Fitzgibbon:1983:IBV}). 
\\ We now try to prove the decay rate \eqref{PropNoE:Kinetic:1} by assuming without loss of generality that $f_\infty=0$. Let us start with the following a priori estimate  by multiplying both sides of \eqref{System1} with $\mathrm{sign}f$:
\begin{equation}\label{PropNoE:Kinetic:E01}
\begin{aligned}
\frac{\partial |f|}{\partial t}(t,r,p)+{p}\cdot\nabla_{{r}} |f|(t,r,p)\ = &\  N_c(t,r)\left[\mathfrak{E}(p)\rho[f](t,r)\mathrm{sign}f(t,r,p) \ - \ |f|(t,r,p)\right] \\
\ \le & \ N_c(t,r)\left[\mathfrak{E}(p)\rho[|f|](t,r) \ - \ |f|(t,r,p)\right].
\end{aligned}
\end{equation}
Integrating both sides of Inequality \eqref{PropNoE:Kinetic:E01} yields
\begin{equation*}
\begin{aligned}
\frac{d }{d t}\int_{\mathbb{R}^3}|f|(t,r,p)drdp \ \leq \ 0, 
\end{aligned}
\end{equation*}
which implies
\begin{equation}\label{PropNoE:Kinetic:E02}
\int_{\mathbb{R}^3}|f(t,r,p)|dr dp \ \leq \ \int_{\mathbb{R}^3}|f_0(t,r,p)|dr dp=: M_{|f_0|}. 
\end{equation}
Now, taking the Fourier transform both sides of \eqref{System1}, we find
\begin{equation}\label{PropNoE:Kinetic:E1}
\partial_t \hat{f}(t,\zeta,p)\ + \ i(p\cdot \zeta)\hat{f}(t,\zeta,p) \ = \ \hat{N}_c(t,\zeta)*[\hat{\rho}(t,\zeta)\mathfrak{E}(p) \ - \ \hat{f}(t,\zeta,p)].
\end{equation}
Following the perturbed  energy estimate strategy introduced in \cite{DolbeaultMouhotSchmeiser:HFL:2015,BeauchardZuazua:2009:SCR}, we define 
\begin{equation}\label{PropNoE:Kinetic:E2}
\mathcal{E}[f](t,\zeta) \ = \ \left(\int_{\mathbb{R}^3}|\hat{f}|^2\mathfrak{E}^{-1}dp\right)(t,\zeta) \ + \ \delta \ \mathrm{Re}\left(R_\zeta [\hat{f}]\overline{\hat{f}}\mathfrak{E}^{-1}dp\right),
\end{equation}
where 
\begin{equation}\label{PropNoE:Kinetic:E3}
R_\zeta [\hat{f}] \ := \ \frac{-i\zeta}{1+|\zeta|^2}\rho(p\hat{f})\mathfrak{E},~~~\rho(p\hat{f}) \ = \ \int_{\mathbb{R}^3}p\hat{f}dp.
\end{equation}
Let us estimate the norm of $R_\zeta [\hat{f}]$, by using H\"older inequality for $\rho(v\hat{f})$
\begin{equation}\label{PropNoE:Kinetic:E4}
\begin{aligned}
\int_{\mathbb{R}^3}|R_\zeta [\hat{f}]|^2\mathfrak{E}^{-1}dp& \ = \ \int_{\mathbb{R}^3}\frac{|\zeta|^2}{(1+|\zeta|^2)^2}\left|\int_{\mathbb{R}^3}p\hat{f}dp\right|^2\mathfrak{E}dp\\
&\ \le \ \int_{\mathbb{R}^3}\frac{|\zeta|^2}{(1+|\zeta|^2)^2}\left(\int_{\mathbb{R}^3}|\hat{f}|^2\mathfrak{E}^{-1}dp\right)\left(\int_{\mathbb{R}^3}|p|^2\mathfrak{E}dp\right)\mathfrak{E}dp,
\end{aligned}
\end{equation}
which, due to the facts that the integral on $\mathbb{R}^3$ of $|p|^2\mathfrak{E}$ is finite and the inequality $|\zeta|^2\le \frac{1}{4}(1+|\zeta|^2)^2$, implies
\begin{equation}\label{PropNoE:Kinetic:E5}
\begin{aligned}
\int_{\mathbb{R}^3}|R_\zeta [\hat{f}]|^2\mathfrak{E}^{-1}dp
&\ \le \ C\int_{\mathbb{R}^3}\left(\int_{\mathbb{R}^3}|\hat{f}|^2\mathfrak{E}^{-1}dp\right)\mathfrak{E}dp\\
&\ \le \ C\left(\int_{\mathbb{R}^3}|\hat{f}|^2\mathfrak{E}^{-1}dp\right),
\end{aligned}
\end{equation}
where the last inequality follows from the fact that the integral on $\mathbb{R}^3$ of $\mathfrak{E}$ is finite.
From Inequality \eqref{PropNoE:Kinetic:E5}, we deduce that, for $\delta$ small enough, there exist two positive constants $C_1$ and $C_2$ independent of $\zeta$ and $t$ such that
\begin{equation}\label{PropNoE:Kinetic:E6}
C_1\left(\int_{\mathbb{R}^3}|\hat{f}|^2\mathfrak{E}^{-1}dp\right)(t,\zeta)\ \le\ \mathcal{E}[f](t,\zeta)\ \le \ C_2\left(\int_{\mathbb{R}^3}|\hat{f}|^2\mathfrak{E}^{-1}dp\right)(t,\zeta).
\end{equation}
We  estimate the derivative in time of the norm $\mathcal{L}^2(\mathbb{R}^3\times\mathbb{R}^3)$ of the first term of $\mathcal{E}[f]$ in \eqref{PropNoE:Kinetic:E2}. It is straightforward that
\begin{equation}\label{PropNoE:Kinetic:E7}
\begin{aligned}
\partial_t\int_{\mathbb{R}^3\times \mathbb{R}^3}|{f}|^2\mathfrak{E}^{-1}drdp &=~~~2\int_{\mathbb{R}^3\times \mathbb{R}^3}(\partial_t{f}){{f}}\mathfrak{E}^{-1}dxdp.
\end{aligned}
\end{equation}
Using Equation \eqref{System1} to replace $\partial_t f$ a in the above equation by $-{p}\cdot\nabla_{{r}} f \ + \ L[f]$ yields
\begin{equation}\label{PropNoE:Kinetic:E8}
\begin{aligned}
\partial_t\int_{\mathbb{R}^3\times \mathbb{R}^3}|{f}|^2\mathfrak{E}^{-1}drdp \ \ &=~~~2\int_{\mathbb{R}^3\times \mathbb{R}^3}\left(-{p}\cdot\nabla_{{r}} f \ + \ L[f]\right){f}\mathfrak{E}^{-1}drdp\\ 
\ \ &=~~~2\int_{\mathbb{R}^3\times \mathbb{R}^3}L[f]{f}\mathfrak{E}^{-1}drdp\\
\ \ &\le~~~ - 2\int_{\mathbb{R}^3\times \mathbb{R}^3}N_c\left[\mathfrak{E}\rho [f]\ - \ f\right]^2\mathfrak{E}^{-1}drdp.
\end{aligned}
\end{equation}
Using the fact that $N_c\geq \underline{C}_\Phi$, we can bound the integral of  $N_c\left[\mathfrak{E}(p)\rho(t,r) \ - \ f(t,r,p)\right]^2\mathfrak{E}^{-1}$ in the above inequality as 
\begin{equation}\label{PropNoE:Kinetic:E9}
\begin{aligned}
\partial_t\int_{\mathbb{R}^3\times \mathbb{R}^3}|{f}|^2\mathfrak{E}^{-1}drdp \ \ &\le~~~ - \underline{C}_\Phi\int_{\mathbb{R}^3\times \mathbb{R}^3}\left[\mathfrak{E}\rho[f] \ - \ f\right]^2\mathfrak{E}^{-1}drdp.
\end{aligned}
\end{equation}
Notice that in the above inequality, by the Parseval identity, we can switch the integral in $r$ into an integral in $\zeta$, which yields
\begin{equation}\label{PropNoE:Kinetic:E9}
\begin{aligned}
\partial_t\int_{\mathbb{R}^3\times \mathbb{R}^3}|\hat{f}|^2\mathfrak{E}^{-1}d\zeta dp \ \ &\le~~~ - \underline{C}_\Phi\int_{\mathbb{R}^3\times \mathbb{R}^3}\left[\mathfrak{E}\hat\rho[f] \ - \ \hat{f}\right]^2\mathfrak{E}^{-1}d\zeta dp.
\end{aligned}
\end{equation}
We now estimate the derivative in time of the norm $\mathcal{L}^2(\mathbb{R}^3\times\mathbb{R}^3)$  of the second term of $\mathcal{E}[f]$ in \eqref{PropNoE:Kinetic:E2}. Observe that 
\begin{equation}\label{PropNoE:Kinetic:E10}
\begin{aligned}
\partial_t  \int_{\mathbb{R}^3}R_\zeta [\hat{f}]\overline{\hat{f}}\mathfrak{E}^{-1}dp \ \ = \ \  \int_{\mathbb{R}^3}R_\zeta [\partial_t  \hat{f}]\overline{\hat{f}}\mathfrak{E}^{-1}dp \ \ + \ \ \int_{\mathbb{R}^3}R_\zeta [\hat{f}]\partial_t \overline{\hat{f}}\mathfrak{E}^{-1}dp,
\end{aligned}
\end{equation}
Using again Equation \eqref{System1} to replace $\partial_t \hat{f}$ and $\partial_t\bar{\hat{f}}$ in the above equation by $-{p}\cdot \zeta \hat{f} \ + \ L[\hat{f}]$ and $-{p}\cdot \zeta \overline{\hat{f}} \ + \ L[\overline{\hat{f}}]$, we find
\begin{equation}\label{PropNoE:Kinetic:E11}
\partial_t  \int_{\mathbb{R}^3\times \mathbb{R}^3}R_\zeta [\hat{f}]\overline{\hat{f}}\mathfrak{E}^{-1}dp \ =  \ I \ = \ I_1 \  \ + \ \ I_2 \ \ + \ \ I_3 \ \ + \ \ I_4,
\end{equation}
where 
\begin{equation}\label{PropNoE:Kinetic:E12}
\begin{aligned}
I_1 ~~~~~ & := ~~~~~  - \int_{\mathbb{R}^3\times \mathbb{R}^3}R_\zeta [i p\cdot \zeta \   \hat{f}] \ \overline{\hat{f}} \ \mathfrak{E}^{-1} \ dpd\zeta,\\
I_2 ~~~~~ &:= ~~~~~ \int_{\mathbb{R}^3\times \mathbb{R}^3}R_\zeta [L[\hat{f}]]  \  \overline{\hat{f}} \  \mathfrak{E}^{-1} \ dpd\zeta, \\
I_3 ~~~~~ &:= ~~~~~ -\int_{\mathbb{R}^3\times \mathbb{R}^3}R_\zeta [\hat{f}] \  \overline{i p\cdot \zeta  \hat{f}} \  \mathfrak{E}^{-1} \ dpd\zeta, \\
I_4 ~~~~~ &:= ~~~~~ \int_{\mathbb{R}^3\times \mathbb{R}^3}R_\zeta [\hat{f}] \ \overline{L[\hat{f}]}\  \mathfrak{E}^{-1} \ dpd\zeta.
\end{aligned}
\end{equation}
In the sequel, we will estimate $I_1$, $I_2$, $I_3$ and $I_4$ step by step. Let us start with $I_1$:
\begin{equation}\label{PropNoE:Kinetic:E13}
\begin{aligned}
I_1 \   & := \  - \int_{\mathbb{R}^3\times \mathbb{R}^3}\frac{-i\zeta}{1+|\zeta|^2} \ \rho(pip\cdot \zeta\hat{f} )  \ \overline{\hat{f}} \ \mathfrak{E} \ \mathfrak{E}^{-1} \ dpd\zeta\\
\ &= \ - \int_{\mathbb{R}^3\times \mathbb{R}^3}\frac{\zeta \otimes \zeta }{1+|\zeta|^2} \ : \ \rho[p \otimes p \hat{f}] \ \overline{\hat{f}} \ dpd\zeta\\
\ &= \ - \int_{\mathbb{R}^3}\frac{\zeta \otimes \zeta }{1+|\zeta|^2} \ : \ \rho[p \otimes p \hat{f}] \ \overline{\rho[\hat{f}]}d\zeta,
\end{aligned}
\end{equation}
in which the following notation for matrix contraction has been used
$$\begin{bmatrix}
    a_{11}       & a_{12} & a_{13} \\
    a_{21}       & a_{22} & a_{23} \\
    a_{31}       & a_{32} & a_{33} 
\end{bmatrix}
 \ : \ \begin{bmatrix}
    b_{11}       & b_{12} & b_{13} \\
    b_{21}       & b_{22} & b_{23} \\
    b_{31}       & b_{32} & b_{33} 
\end{bmatrix} \ = \ \sum_{i,j=1}^3 a_{i,j}b_{i,j}.$$
In Equation \eqref{PropNoE:Kinetic:E13}, we split $ p \otimes p \hat{f} $ as the sum of $ p \otimes p \mathfrak{E} \rho[\hat{f}]$ and $ p \otimes p (\hat{f} -  \mathfrak{E}  \rho[\hat{f}])$, and obtain  
\begin{equation}\label{PropNoE:Kinetic:E14}
\begin{aligned}
I_1 \   & := \  I_{11}  \ + \ I_{12},
\end{aligned}
\end{equation}
where
\begin{equation}\label{PropNoE:Kinetic:E15}
\begin{aligned}
I_{11} \   & :=  \ - \int_{\mathbb{R}^3}\frac{\zeta \otimes \zeta }{1+|\zeta|^2} \ : \ \rho[p \otimes  p \mathfrak{E}  \rho[\hat{f}]] \ \overline{\rho[\hat{f}]} d\zeta\ = \  - \int_{\mathbb{R}^3}\frac{\zeta \otimes \zeta }{1+|\zeta|^2} \ : \ \rho[p \otimes p \mathfrak{E} ] \   \left|\rho[\hat{f}]\right|^2d\zeta,\\
I_{12} \   & :=  \ - \int_{\mathbb{R}^3}\frac{\zeta \otimes \zeta }{1+|\zeta|^2} \ : \ \rho[p \otimes p (\hat{f}-\mathfrak{E} \rho[\hat{f}])] \ \overline{\rho[\hat{f}]} d\zeta.
\end{aligned}
\end{equation}
Now, for $I_{11}$, the fact that $\rho[p \otimes p \mathfrak{E} ]=\mathrm{Id}$ implies 
\begin{equation}\label{PropNoE:Kinetic:E16}
I_{11} \     =  \ - \int_{\mathbb{R}^3} \frac{\zeta \otimes \zeta }{1+|\zeta|^2} \ : \ \mathrm{Id} \   \left|\rho[\hat{f}]\right|^2 \  d\zeta   =  \int_{\mathbb{R}^3} \ - \frac{|\zeta|^2}{1+|\zeta|^2}  \left|\rho[\hat{f}]\right|^2 d\zeta.
\end{equation}
Set $F=\hat{f}-\mathfrak{E} \rho[\hat{f}]$,  the second term $I_{12}$ can be estimated as follows
\begin{equation}\label{PropNoE:Kinetic:E17}
\begin{aligned}
|I_{12}| \   & =  \  \int_{\mathbb{R}^3} \left|\frac{\zeta \otimes \zeta }{1+|\zeta|^2} \ : \ \rho[p \otimes p F] \ \overline{\rho[\hat{f}]}\right| d\zeta\\
\ & \le  \ \int_{\mathbb{R}^3}  \frac{|\zeta|^2\left|{\rho[\hat{f}]}\right|}{1+|\zeta|^2} \ \int_{\mathbb{R}^3} |p|^2  |F| \ dp \,d\zeta
\end{aligned}
\end{equation}
which, by H\"older inequality applied to the integral on $p$, can be bounded as
\begin{equation}\label{PropNoE:Kinetic:E18}
\begin{aligned}
|I_{12}| \  
\ & \le  \ \int_{\mathbb{R}^3} \frac{|\zeta|^2\left|{\rho[\hat{f}]}\right|}{1+|\zeta|^2} \ \left(\int_{\mathbb{R}^3}  |F|^2\mathfrak{E}^{-1} \ dp\right)^{\frac{1}{2}}\ \left(\int_{\mathbb{R}^3}  |p|^4\mathfrak{E}\ dp\right)^{\frac{1}{2}} d\zeta\\
\ & \le \ C \int_{\mathbb{R}^3}  \frac{|\zeta|^2}{1+|\zeta|^2} \ \left(\int_{\mathbb{R}^3}  |F|^2\mathfrak{E}^{-1} \ dp\right)^{\frac{1}{2}}\left|{\rho[\hat{f}]}\right|d\zeta.
\end{aligned}
\end{equation}
Combining \eqref{PropNoE:Kinetic:E16} and \eqref{PropNoE:Kinetic:E18} yields the following estimate on $I_1$
\begin{equation}\label{PropNoE:Kinetic:EstimateI1}
I_{1} \     \le   \ - \int_{\mathbb{R}^3}  \frac{|\zeta|^2}{1+|\zeta|^2}  \left|\rho[\hat{f}]\right|^2 \ d\zeta + \ \int_{\mathbb{R}^3} C\frac{|\zeta|^2}{1+|\zeta|^2} \ \left(\int_{\mathbb{R}^3}  |F|^2\mathfrak{E}^{-1} \ dp\right)^{\frac{1}{2}}\left|{\rho[\hat{f}]}\right| d\zeta.
\end{equation}
We continue with estimating the second term $I_2$, which could be written under the following form
\begin{equation}\label{PropNoE:Kinetic:E19}
\begin{aligned}
I_2 ~~~~~ &= ~~~~~ \int_{\mathbb{R}^3\times \mathbb{R}^3} \frac{-i\zeta}{1+|\zeta|^2} \cdot \rho(p L[\hat{f}])  \  \overline{\hat{f}} \ \mathfrak{E}  \  \mathfrak{E}^{-1} \ dp d\zeta\\
&= ~~~~~ \int_{\mathbb{R}^3\times \mathbb{R}^3} \frac{i \zeta}{1+|\zeta|^2} \cdot \rho(p L[\hat{f}])  \  \overline{\hat{f}} \ dp d\zeta.
\end{aligned}
\end{equation}
It is straightforward that $L(\mathfrak{E})=0$, which implies  $$L[\rho[\hat{f}]\mathfrak{E}]=L(\mathfrak{E})\rho[\hat{f}]=0.$$
Hence $L[F]=L[\hat{f}]$ and it follows from \eqref{PropNoE:Kinetic:E19} that
\begin{equation}\label{PropNoE:Kinetic:E20}
\begin{aligned}
I_2 ~~~&= ~~~ \int_{\mathbb{R}^3\times \mathbb{R}^3} \frac{i \zeta}{1+|\zeta|^2} \cdot \rho[p L[F]]  \  \overline{\hat{f}} \ dp d\zeta
\ = \ \int_{\mathbb{R}^3} \frac{i \zeta}{1+|\zeta|^2} \cdot \rho[p L[F]]\rho[\overline{\hat{f}}] d\zeta.
\end{aligned}
\end{equation}
Let us estimate  $\rho[p L[F]] $ first. By definition this term  can be rewritten as
\begin{eqnarray*}
\rho[p L[F]] (t,\zeta)\ & =  & \ \left(\int_{\mathbb{R}^3}p\hat{N}_c(t,\cdot)*[\hat{\rho}[F](t,\cdot)\mathfrak{E}(p) \ - \ F(t,\cdot,p)]\ dp\right)(\zeta)\\
\ & = &\ \left(\hat{N}_c(t,\cdot)*\hat{\rho}[F](t,\cdot) \int_{\mathbb{R}^3}p\mathfrak{E}(p) \ dp \ - \ \hat{N}_c(t,\cdot)* \ \int_{\mathbb{R}^3}p F(t,\cdot,p)\ dp \right)(\zeta).
\end{eqnarray*}
Since $\int_{\mathbb{R}^3}p\mathfrak{E}(p)dp=0$, the first term in  $\rho[pL[F]]$ is zero and $\rho[pL[F]]$  can be reduced to
\begin{eqnarray*}
\rho[p L[F]](t,\zeta) \ & =  & \  \ -\left(\hat{N}_c(t,\cdot)* \ \int_{\mathbb{R}^3} p F(t,\cdot,p)\ dp\right)(\zeta),
\end{eqnarray*}
which, by H\"older inequality applied to the integral in $p$, can be bounded as
\begin{eqnarray*}
\left|\rho[p L[F]](t,\zeta)\right| \ & \le   & \  \ \left| \ \left(\int_{\mathbb{R}^3}|\hat{N}_c(t,\cdot)* F(t,\cdot,p)|^2\mathfrak{E}(p)^{-1}\ dp\right)^{\frac{1}{2}}(\zeta)\left(\int_{\mathbb{R}^3} |p|^2\mathfrak{E}(p)\ dp\right)^{\frac{1}{2}}\right|\\
\ &\le &\ C\left(\int_{\mathbb{R}^3} |\hat{N}_c(t,\cdot)* F(t,\cdot,p)|^2\mathfrak{E}(p)^{-1}\ dp\right)^{\frac{1}{2}}(\zeta),
\end{eqnarray*}
which could be bounded by using H\"oder inequality for the integral in $p$ of $\rho[p L[F]] $ as
\begin{equation}\label{PropNoE:Kinetic:EstimateI2}
\begin{aligned}
|I_2|~~~&\le  ~~~ \ C\int_{\mathbb{R}^3}\frac{|\zeta|}{1+|\zeta|^2}\ \left(\int_{\mathbb{R}^3} |\hat{N}_c(t,\cdot)* F(t,\cdot,p)|^2\mathfrak{E}(p)^{-1}\ dp\right)^{\frac{1}{2}}(\zeta)\left|\rho[{\hat{f}}]\right| d\zeta.
\end{aligned}
\end{equation}
Now, we estimate $I_3$
\begin{equation}\label{PropNoE:Kinetic:E21}
\begin{aligned}
I_3 ~~~~~ &= ~~~~~ - \int_{\mathbb{R}^3\times \mathbb{R}^3}\frac{-i\zeta}{1+|\zeta|^2}\cdot \rho(p\hat{f})\mathfrak{E} \ i p\cdot \zeta  \overline{  \hat{f}} \  \mathfrak{E}^{-1} \ dpd\zeta \\
~~~~~ &= ~~~~~ - \int_{\mathbb{R}^3} \frac{|\zeta\cdot \rho[p\hat{f}]|^2}{1+\zeta^2} d\zeta \\
~~~~~ &= ~~~~~ - \int_{\mathbb{R}^3} \frac{|\zeta\cdot \rho[pF]|^2}{1+\zeta^2} d\zeta ,
\end{aligned}
\end{equation}
where we have used the fact that 
$$\rho[p \rho[\hat{f}] \mathfrak{E}] \ = \ \rho[\hat{f}] \ \int_{\mathbb{R}^3} \ p \mathfrak{E} \ dp \ = \ 0.$$
In order to estimate $|I_3|$, we will first try to bound $\rho[p F]$. By H\"older inequality, we find
\begin{eqnarray*}
|\rho[p F](t,\zeta)| \ & = & \ \left|\int_{\mathbb{R}^3}|p|F(t,\zeta,p)dp \right| \\
\ &\le & \ \left(\int_{\mathbb{R}^3}|F(t,\zeta,p)|^2\mathfrak{E}^{-1}(p)dp \right)^{\frac{1}{2}}\left(\int_{\mathbb{R}^3}|p|^2\mathfrak{E}(p)dp \right)^{\frac{1}{2}}\\
\ &\le & \ C\left(\int_{\mathbb{R}^3}|F(t,\zeta,p)|^2\mathfrak{E}^{-1}(p)dp \right)^{\frac{1}{2}},
\end{eqnarray*}
which, together with Inequality \eqref{PropNoE:Kinetic:E21}, implies
\begin{equation}\label{PropNoE:Kinetic:EstimateI3}
\begin{aligned}
|I_3| ~~~ &\le  ~~~ C \int_{\mathbb{R}^3\times \mathbb{R}^3} \frac{\zeta^2}{1+\zeta^2} \left(\int_{\mathbb{R}^3}|F(t,\zeta,p)|^2\mathfrak{E}^{-1}(p)dp \right) d\zeta.
\end{aligned}
\end{equation}
Estimating $I_4$ is quite easy and we proceed as follows:
\begin{equation}\label{PropNoE:Kinetic:EstimateI4}
\begin{aligned}
I_4 \ & = \  \int_{\mathbb{R}^3\times \mathbb{R}^3}\frac{-i\zeta}{1+|\zeta|^2}\rho(p\hat{f})\mathfrak{E} \ \overline{L[\hat{f}]}\  \mathfrak{E}^{-1} \ dpd\zeta \\
& = \  \frac{-i\zeta}{1+|\zeta|^2}\rho(p\hat{f})\int_{\mathbb{R}^3\times \mathbb{R}^3} \ \overline{L[\hat{f}]}\ dpd\zeta \\
&= \ 0.
\end{aligned}
\end{equation}
Putting together the four inequalities \eqref{PropNoE:Kinetic:EstimateI1}, \eqref{PropNoE:Kinetic:EstimateI2}, \eqref{PropNoE:Kinetic:EstimateI3} and \eqref{PropNoE:Kinetic:EstimateI4} yields the following estimate on $I$ 
\begin{equation}\label{PropNoE:Kinetic:E22}\begin{aligned}
I \     \le &    \ - \int_{\mathbb{R}^3}  \frac{|\zeta|^2}{1+|\zeta|^2}  \left|\rho[\hat{f}]\right|^2 \ d\zeta \\
& \ + \int_{\mathbb{R}^3}\frac{|\zeta|}{1+|\zeta|^2}\left|\ \left(\int_{\mathbb{R}^3} |\hat{N}_c(t,\cdot)* F(t,\cdot,p)|^2(\zeta)\mathfrak{E}(p)^{-1}\ dp\right)^{\frac{1}{2}}\right|\left|\rho[{\hat{f}}]\right| d\zeta\\
& \ + C \int_{\mathbb{R}^3\times\mathbb{R}^3} \frac{\zeta^2}{1+\zeta^2} \int_{\mathbb{R}^3}|F(t,\zeta,p)|^2\mathfrak{E}^{-1}(p)dp  d\zeta,
\end{aligned}
\end{equation}
where $C$ is a constant varying from lines to lines.\\
Applying the Cauchy-Schwarz inequality 
$$|\alpha\beta|\le \frac{\alpha^2}{2\epsilon}\ + \ \frac{\epsilon}{2}\beta^2$$
to the right hand side of \eqref{PropNoE:Kinetic:E22}, we find
\begin{equation}\label{PropNoE:Kinetic:E23}\begin{aligned}
I \     \le &   \ - \int_{\mathbb{R}^3}  \frac{ |\zeta|^2}{1+|\zeta|^2}  \left|\rho[\hat{f}]\right|^2 \ d\zeta \\
& + \int_{\mathbb{R}^3}\frac{\epsilon |\zeta|^2}{(1+|\zeta|^2)^2}\left|\rho[{\hat{f}}]\right|^2d\zeta \ +\ \frac{C}{\epsilon} \int_{\mathbb{R}^3\times\mathbb{R}^3 } |\hat{N}_c(t,\cdot)* F(t,\cdot,p)|^2(
\zeta)\mathfrak{E}(p)^{-1}\  dpd\zeta\\
& + \ {C} \int_{\mathbb{R}^3\times\mathbb{R}^3 }  \frac{ |\zeta|^2}{1+|\zeta|^2} |F|^2\mathfrak{E}^{-1} \ dpd\zeta\\
\le &  \ C \int_{\mathbb{R}^3} \left(-\frac{ |\zeta|^2}{1+|\zeta|^2} +\frac{\epsilon |\zeta|^2}{(1+|\zeta|^2)^2}\right) \left|\rho[\hat{f}]\right|^2 \ d\zeta \ \\
& + {C}\left(1+\frac{1}{\epsilon}\right)\int_{\mathbb{R}^3\times\mathbb{R}^3 } \frac{ |\zeta|^2}{1+|\zeta|^2} |F|^2\mathfrak{E}^{-1} \ dpd\zeta + \frac{C}{\epsilon} \int_{\mathbb{R}^3\times\mathbb{R}^3 } |\hat{N}_c(t,\zeta)* F(t,\zeta,p)|^2\mathfrak{E}(p)^{-1}\  dpd\zeta.
\end{aligned}
\end{equation}
Let $\check{F}$ be the inverse Fourier transform of $F$, by Parseval identity, we have
\begin{eqnarray*}
\int_{\mathbb{R}^3} |\hat{N}_c(t,\cdot)* F(t,\cdot,p)|^2(\zeta)\mathfrak{E}(p)^{-1}\  d\zeta &  = & \int_{\mathbb{R}^3} |{N}_c(t,r)|^2 |\check{F}(t,r,p)|^2\mathfrak{E}(p)^{-1}\  dr.
\end{eqnarray*}
Using the assumption $|N_c(t,r)|\le \overline{C}_\Phi$, the right hand side of the above identity can be estimated as
\begin{eqnarray*}
\int_{\mathbb{R}^3} |\hat{N}_c(t,\cdot)* F(t,\cdot,p)|^2(\zeta)\mathfrak{E}(p)^{-1}\  d\zeta &  \le  & |\overline{C}_\Phi|^2 \int_{\mathbb{R}^3} |\check{F}(t,r,p)|^2\mathfrak{E}(p)^{-1}\  dr.
\end{eqnarray*}
Applying Parseval identity to the right hand side of the above inequality leads to
\begin{eqnarray*}
\int_{\mathbb{R}^3}  |\hat{N}_c(t,\cdot)* F(t,\cdot,p)|^2(\zeta)\mathfrak{E}(p)^{-1}\  d\zeta &  \le  & |\overline{C}_\Phi|^2 \int_{\mathbb{R}^3} |{F}(t,\zeta,p)|^2\mathfrak{E}(p)^{-1}\  d\zeta,
\end{eqnarray*}
which, together with \eqref{PropNoE:Kinetic:E23}, implies 
\begin{equation}\label{PropNoE:Kinetic:E24}\begin{aligned}
I \     \le &  \ C \int_{\mathbb{R}^3} \left(-\frac{ |\zeta|^2}{1+|\zeta|^2} +\frac{\epsilon |\zeta|^2}{(1+|\zeta|^2)^2}\right) \left|\rho[\hat{f}]\right|^2 \ d\zeta \ +\\
\  & \ +  \ {C}\left(1+\frac{1}{\epsilon}\right)\int_{\mathbb{R}^3\times\mathbb{R}^3 }  \frac{ |\zeta|^2}{1+|\zeta|^2} |F|^2\mathfrak{E}^{-1} \ dpd\zeta.
\end{aligned}
\end{equation}
We now combine \eqref{PropNoE:Kinetic:E8} and \eqref{PropNoE:Kinetic:E24}, to get the following estimate on $\mathcal{E}$
\begin{equation}\label{PropNoE:Kinetic:E25}\begin{aligned}
\partial_t\int_{\mathbb{R}^3}\mathcal{E}[f](t,\zeta)d\zeta \     \le &  \ C\delta \int_{\mathbb{R}^3} \left(-\frac{ |\zeta|^2}{1+|\zeta|^2} +\frac{\epsilon |\zeta|^2}{(1+|\zeta|^2)^2}\right) \left|\rho[\hat{f}]\right|^2 \ d\zeta \ +\\
\  & \ +  \ \left[-1+{C\delta}\left(1+\frac{1}{\epsilon}\right)\right]\int_{\mathbb{R}^3\times\mathbb{R}^3 }  |F|^2\mathfrak{E}^{-1} \ dpd\zeta.
\end{aligned}
\end{equation}
Choosing $\delta$ and $\epsilon$, such that 
$$-\delta\frac{ |\zeta|^2}{1+|\zeta|^2} +\frac{\delta\epsilon |\zeta|^2}{(1+|\zeta|^2)^2}\le -\frac{\delta}{2}\frac{ |\zeta|^2}{1+|\zeta|^2},$$
and
$$-1+{\delta C}\left(1+\frac{1}{\epsilon}\right)\le \frac{1}{2},$$
we get the following estimate from \eqref{PropNoE:Kinetic:E25}
\begin{equation}\label{PropNoE:Kinetic:E26}\begin{aligned}
\partial_t\int_{\mathbb{R}^3}\mathcal{E}[f](t,\zeta)d\zeta \     \le &  \ - C\frac{\delta}{2} \int_{\mathbb{R}^3} \frac{ |\zeta|^2}{1+|\zeta|^2} \left|\rho[\hat{f}]\right|^2 \ d\zeta \ - \
  \ \frac{1}{2}\int_{\mathbb{R}^3\times\mathbb{R}^3 }  |F|^2\mathfrak{E}^{-1} \ dpd\zeta.
\end{aligned}
\end{equation}
Suppose that $C\delta <1$, we deduce from the above inequality that
\begin{equation}\label{PropNoE:Kinetic:E27}\begin{aligned}
\partial_t\int_{\mathbb{R}^3}\mathcal{E}[f](t,\zeta)d\zeta \     \le &  \ - C\frac{\delta}{2} \int_{\mathbb{R}^3} \frac{ |\zeta|^2}{1+|\zeta|^2} \left|\rho[\hat{f}]\right|^2 \ d\zeta \ - \
  \ C\frac{\delta}{2}\int_{\mathbb{R}^3\times\mathbb{R}^3 }  \frac{ |\zeta|^2}{1+|\zeta|^2}  |F|^2\mathfrak{E}^{-1} \ dpd\zeta.
\end{aligned}
\end{equation}
Using the identity
$$\int_{\mathbb{R}^3}\mathfrak{E}(p)dp=1,$$
we find that 
$$\int_{\mathbb{R}^3} \frac{ |\zeta|^2}{1+|\zeta|^2} \left|\rho[\hat{f}]\right|^2 \ d\zeta \ = \ \int_{\mathbb{R}^3\times\mathbb{R}^3} \frac{ |\zeta|^2}{1+|\zeta|^2} \left|\rho[\hat{f}]\mathfrak{E}\right|^2 \mathfrak{E}^{-1} \ d\zeta dp,$$
which, combining with \eqref{PropNoE:Kinetic:E26}, implies
\begin{equation}\label{PropNoE:Kinetic:E28}\begin{aligned}
\partial_t\int_{\mathbb{R}^3}\mathcal{E}[f](t,\zeta)d\zeta \     \le &  \ - C\frac{\delta}{2} \left\{\int_{\mathbb{R}^3\times\mathbb{R}^3} \frac{ |\zeta|^2}{1+|\zeta|^2} \left[\left|\rho[\hat{f}]\mathfrak{E}\right|^2 \ + \ \left|\hat{f}- \rho[\hat{f}]\mathfrak{E}\right|^2   \right]\mathfrak{E}^{-1}dpd\zeta\right\}.
\end{aligned}
\end{equation}
Let us remark that by Cauchy-Schwarz inequality, we have
$$\left|\rho[\hat{f}]\mathfrak{E}\right|^2 \ + \ \left|\hat{f}- \rho[\hat{f}]\mathfrak{E}\right|^2 \ge \frac{1}{2}\left|\hat{f}\right|^2,$$
which, together with \eqref{PropNoE:Kinetic:E28}, yields
\begin{equation}\label{PropNoE:Kinetic:E29}\begin{aligned}
\partial_t\int_{\mathbb{R}^3}\mathcal{E}[f](t,\zeta)d\zeta \     \le &  \ - C\frac{\delta}{4} \left[\int_{\mathbb{R}^3\times\mathbb{R}^3} \frac{ |\zeta|^2}{1+|\zeta|^2} \left|\hat{f}\right|^2 \mathfrak{E}^{-1} dpd\zeta\right] \ =: \ A[f].
\end{aligned}
\end{equation}
Let us estimate $A[f]$, by H\"older inequality
\begin{equation}\label{PropNoE:Kinetic:E30}
\left[\int_{\mathbb{R}^3\times\mathbb{R}^3} \frac{ |\zeta|^2}{1+|\zeta|^2} \left|\hat{f}\right|^2 \mathfrak{E}^{-1} dpd\zeta\right]\left[\int_{\mathbb{R}^3\times\mathbb{R}^3} \frac{1+|\zeta|^2}{ |\zeta|^2} \left|\hat{f}\right|^2 \mathfrak{E}^{-1} dpd\zeta\right]\ \ge \ \left[\int_{\mathbb{R}^3\times\mathbb{R}^3}  \left|\hat{f}\right|^2 \mathfrak{E}^{-1} dpd\zeta\right]^2.
\end{equation}
In order to obtain an inequality for $A[f]$, we will prove that the factor $$B[f]\ := \ \int_{\mathbb{R}^3\times\mathbb{R}^3} \frac{1+|\zeta|^2}{ |\zeta|^2} \left|\hat{f}\right|^2 \mathfrak{E}^{-1} dpd\zeta$$ in the above inequality is bounded. It is straightforward from Inequality  \eqref{PropNoE:Kinetic:E9}, that 
\begin{equation}\label{PropNoE:Kinetic:E31}
\begin{aligned}
B_1[f] \ := & \  \int_{\mathbb{R}^3\times\mathbb{R}^3}\left|\hat{f}(t,\zeta,p)\right|^2 \mathfrak{E}^{-1}(p) dpd\zeta \ \le \ \int_{\mathbb{R}^3\times\mathbb{R}^3}\left|\hat{f}_0(\zeta,p)\right|^2 \mathfrak{E}^{-1}(p) dpd\zeta \\
\ \le & \int_{\mathbb{R}^3\times\mathbb{R}^3}\left|{f}_0(r,p)\right|^2 \mathfrak{E}^{-1}(p) dpdr.
\end{aligned}
\end{equation}
We need only to estimate the quantity
$$B_2[f]\ := \ \int_{\mathbb{R}^3\times\mathbb{R}^3} \frac{\left|\hat{f}\right|^2}{ |\zeta|^2}  \mathfrak{E}^{-1} dpd\zeta,$$
which, by splitting the integral of $\zeta$ on $\mathbb{R}^3$ into the sum of two integrals on $\{|\zeta|\le 1\}$ and $\{|\zeta| >1\}$, could be rewritten as
\begin{equation}\label{PropNoE:Kinetic:E32}
B_2[f] \ = \ B_{21}[f] \ + \ B_{22}[f],
\end{equation}
where
$$B_{21}[f]\ := \ \int_{\{|\zeta|\le 1\}\times\mathbb{R}^3} \frac{\left|\hat{f}\right|^2}{ |\zeta|^2}  \mathfrak{E}^{-1} dpd\zeta, \ \ \ B_{22}[f] \ := \ \int_{\{|\zeta|> 1\}\times\mathbb{R}^3} \frac{\left|\hat{f}\right|^2}{ |\zeta|^2}  \mathfrak{E}^{-1} dpd\zeta.$$
The second term $B_{22}[f]$ can be bounded by $B_1[f]$ in a straightforward manner as follows
\begin{equation}\label{PropNoE:Kinetic:E33}
B_{22}[f] \ \le \ B_{1}[f] \ \le \ \int_{\mathbb{R}^3\times\mathbb{R}^3}\left|{f}_0(r,p)\right|^2 \mathfrak{E}^{-1}(p) dpdr.
\end{equation}
We estimate the first term $B_{21}[f]$
\begin{equation}\label{PropNoE:Kinetic:E34}
\begin{aligned}
B_{21}[f] \ \le & \   \int_{\mathbb{R}^3}  \int_{\{|\zeta|\le 1\}}\frac{\left|\hat{f}\right|^2}{ |\zeta|^2}  \mathfrak{E}^{-1} d\zeta dp \ \le \  \int_{\mathbb{R}^3} \sup_{|
\zeta|
\le 1} \left|\hat{f}(\zeta,p)\right|^2 \int_{\{|\zeta|\le 1\}}\frac{1}{ |\zeta|^2}  \mathfrak{E}^{-1} d\zeta dp\\
\ \le & \ C\int_{\mathbb{R}^3}\left(\int_{\mathbb{R}^3}|f(t,r,p)|dr\right)^2\mathfrak{E}^{-1}dp.
\end{aligned}
\end{equation}
In order to bound the right hand side of \eqref{PropNoE:Kinetic:E34}, let us define $$G(t,p) \ = \ \int_{\mathbb{R}^3}|f(t,r,p)|dr,$$
and then by \eqref{PropNoE:Kinetic:E01}
$$\partial_t G \ + \ \int_{\mathbb{R}^3}N_c(t,r)|f(t,r,p)dr| \ \le  \ \int_{\mathbb{R}^3}N_c(t,r) \mathfrak{E}(p) \rho[|f|](t,r)dr.$$
By using the bounds $\underline{C}_N \le N_c(t,r) \le \overline{C}_N$ and \eqref{PropNoE:Kinetic:E02}, we deduce from the above identity that
$$\partial_t G \ + \ \underline{C}_N  G \ \le \ M_{|f_0|}\overline{C}_N\mathfrak{E}(p).$$
Multiplying the above inequality by $G\mathfrak{E}e^{2\underline{C}_Nt}$ and integrate in $p$ yields 
$$\int_{\mathbb{R}^3} \partial_t  G^2\mathfrak{E}e^{2\underline{C}_Nt}dp \ + \ 2\underline{C}_N\int_{\mathbb{R}^3} G^2\mathfrak{E}e^{2\underline{C}_Nt} dp \ \le \ 2M_{|f_0|}\overline{C}_N \int_{\mathbb{R}^3} G e^{2\underline{C}_Nt} dp,$$
which immediately leads to
$$ \partial_t \left(\int_{\mathbb{R}^3} G^2\mathfrak{E}e^{2\underline{C}_Nt}dp\right) \ \le \ 2M_{|f_0|}\overline{C}_N \int_{\mathbb{R}^3} G e^{2\underline{C}_Nt} dp.$$
By H\"older inequality, the right hand side of the above is bounded by
$$\ 2M_{|f_0|}\overline{C}_N \int_{\mathbb{R}^3} G e^{2\underline{C}_Nt} dp \ \le \ \ 2M_{|f_0|}\overline{C}_N \left(\int_{\mathbb{R}^3} G^2 e^{2\underline{C}_Nt} dp \right)^{1/2} e^{\underline{C}_Nt},$$
which yields the following differential inequality
$$ \partial_t \left(\int_{\mathbb{R}^3} G^2\mathfrak{E}e^{2\underline{C}_Nt}dp\right) \ \le \ 2M_{|f_0|}\overline{C}_N \left(\int_{\mathbb{R}^3} G^2 e^{2\underline{C}_Nt} dp \right)^{1/2} e^{\underline{C}_Nt}.$$
Solving the above inequality, we conclude that 
 that the integral $\int_{\mathbb{R}^3}G^2\mathfrak{E}^{-1}dp$ is bounded uniformly in time by some constant $C>0$. As a consequence
\begin{equation}\label{PropNoE:Kinetic:E35}
\begin{aligned}
B_{21}[f] \ \le & \   C,
\end{aligned}
\end{equation}
where $C$ is some universal constant. \\
Combining the Inequalities \eqref{PropNoE:Kinetic:E31}-\eqref{PropNoE:Kinetic:E35}, we find that $B[f]$ is bounded by a universal constant $C$, which, together with \eqref{PropNoE:Kinetic:E30} implies $$A[f]\ \ge \ \left[\int_{\mathbb{R}^3\times\mathbb{R}^3}\left|\hat{f}\right|^2\mathfrak{E}^{-1}dpd\zeta \right]^2.$$
As a result, Inequality \eqref{PropNoE:Kinetic:E29} leads to 
\begin{equation}\label{PropNoE:Kinetic:E36}
\begin{aligned}
\partial_t\int_{\mathbb{R}^3}\mathcal{E}[f](t,\zeta)d\zeta \     \le &  \  -\left(\int_{\mathbb{R}^3}\mathcal{E}[f](t,\zeta)d\zeta \right)^2.\end{aligned}
\end{equation}
Therefore 
\begin{equation}\label{PropNoE:Kinetic:E37}
\begin{aligned}
\int_{\mathbb{R}^3}\mathcal{E}(t,\zeta)d\zeta \     \le &  \  \frac{C}{1+t},\end{aligned}
\end{equation}  
where $C(\mathcal{E}(0,\cdot))$ is some universal constant depending on $\mathcal{E}(0,\cdot)$, which, by \eqref{PropNoE:Kinetic:E6}, implies
\begin{equation}\label{PropNoE:Kinetic:E38}
\int_{\mathbb{R}^3\times\mathbb{R}^3}|\hat{f}|^2\mathfrak{E}^{-1}dpd\zeta\ \le\ \frac{C(\|f_0\|)_{\mathcal{L}}}{1+t}.
\end{equation}
 The second  decay estimate \eqref{PropNoE:Kinetic:2} can be proved by H\"older inequality as follows:
\begin{eqnarray*}
&& \int_{\mathbb{R}^3}\left|\int_{\mathbb{R}^3}f(t,r,p)dp - M_f\right|^2 dr\\
&= &\int_{\mathbb{R}^3}\left|\int_{\mathbb{R}^3}[f(t,r,p) - M_f\mathfrak{E}(p)]dp\right|^2 dr\\
& \le &\int_{\mathbb{R}^3}\int_{\mathbb{R}^3}[f(t,r,p) - M_f\mathfrak{E}(p)]^2\mathfrak{E}^{-1}(p)dp\int_{\mathbb{R}^3}\mathfrak{E}(p)dp dr\\
& \le &\int_{\mathbb{R}^3}\int_{\mathbb{R}^3}[f(t,r,p) - M_f\mathfrak{E}(p)]^2\mathfrak{E}^{-1}(p)dpdr\\
&\le &\frac{C\big(\|f_0\|_{\mathcal{L}}\big)}{1+t}. 
\end{eqnarray*}
\end{proof}

\subsection{The decay rates when $L=L_2$}
Let us start by the following weighted Poincar\'e inequality, whose proof can be found in the Appendix and is inspired by a remark of P.-L. Lions \cite{EscobedoKavian:1987:VPR}, to prove the classical Poincar\'e inequality with inverse Gaussian weight
\begin{equation}\label{Poincare}
\int_{\mathbb{R}^3}e^{|p|^2}|\nabla\varphi(p)|^2 dp \ \ge \ C_{PC}\int_{\mathbb{R}^3}e^{|p|^2}|\varphi(p)|^2 dp,
\end{equation}
for some universal constant $C_{PC}$. We would like to thank E. Zuazua for showing us the remark. 
\begin{lemma}\label{Lemma:Poincare}
We have the following Poincar\'e inequality with inverse Bose-Einstein Distribution  weight, for all function $\varphi$ such that all the integrals below are well defined:
\begin{equation}\label{Poincare}
\int_{\mathbb{R}^3}\mathfrak{E}^{-1}(p)|\nabla\varphi(p)|^2 dp \ \ge \ \frac{1}{4}\int_{\mathbb{R}^3}\mathfrak{E}^{-1}(p)|\varphi(p)|^2 dp.
\end{equation}
\end{lemma}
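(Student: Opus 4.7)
The plan is to adapt the Lions trick mentioned by the authors. I will perform the substitution $\psi := \mathfrak{E}^{-1/2}\varphi$, so that $\varphi = \mathfrak{E}^{1/2}\psi$ and
\[
\nabla\varphi \;=\; \mathfrak{E}^{1/2}\nabla\psi \;+\; \tfrac{1}{2}\mathfrak{E}^{1/2}\bigl(\nabla\log\mathfrak{E}\bigr)\psi.
\]
Squaring, multiplying by $\mathfrak{E}^{-1}$, and integrating, the cross term becomes
$-\tfrac12\int \nabla\log\mathfrak{E}^{-1}\cdot\nabla(\psi^2)\,dp$,
which I will integrate by parts. The resulting identity is the standard ground-state transform identity
\[
\int_{\mathbb{R}^3}\mathfrak{E}^{-1}|\nabla\varphi|^2\,dp
\;=\;\int_{\mathbb{R}^3}|\nabla\psi|^2\,dp \;+\; \int_{\mathbb{R}^3} V(p)\,\psi^2\,dp,
\quad
V(p)\;=\;\tfrac14|\nabla\log\mathfrak{E}^{-1}|^2+\tfrac12\Delta\log\mathfrak{E}^{-1}.
\]
Boundary/decay issues are harmless under the paper's blanket assumption that all integrals below are well defined.

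Next I will compute $V$ explicitly. Writing $r:=|p|$ and using $\log\mathfrak{E}^{-1}=\log(e^{r}-1)-\log C_E$, a direct calculation gives $\partial_r\log\mathfrak{E}^{-1}=\frac{1}{1-e^{-r}}$, hence
\[
|\nabla\log\mathfrak{E}^{-1}|^2=\frac{1}{(1-e^{-r})^2},\qquad
\Delta\log\mathfrak{E}^{-1}=-\frac{e^{-r}}{(1-e^{-r})^2}+\frac{2}{r(1-e^{-r})},
\]
and therefore
\[
V(p)\;=\;\frac{1-2e^{-r}}{4(1-e^{-r})^2}\;+\;\frac{1}{r(1-e^{-r})}.
\]
Dropping the (nonnegative) kinetic piece $\int|\nabla\psi|^2$ and recalling $\psi^2=\mathfrak{E}^{-1}\varphi^2$, the lemma reduces to the pointwise bound $V(p)\ge\tfrac14$ for all $r>0$.

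This pointwise bound is the core of the proof. Rewriting $V(p)-\tfrac14=-\tfrac{1}{4(e^r-1)^2}+\tfrac{e^r}{r(e^r-1)}$ and clearing denominators, the claim is equivalent to
\[
h(r)\;:=\;4e^r(e^r-1)-r\;\ge\;0,\qquad r\ge 0.
\]
I verify $h(0)=0$ and $h'(r)=8e^{2r}-4e^r-1$, which at $r=0$ equals $3$ and is strictly increasing (its derivative $16e^{2r}-4e^r$ is positive for $r\ge 0$). So $h'\ge3>0$ on $[0,\infty)$, giving $h\ge0$ with strict inequality for $r>0$. This finishes $V\ge\tfrac14$ and hence the lemma.

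The only slightly delicate step is the integration by parts: the singularity of $\mathfrak{E}^{-1}$ at the origin (where $\mathfrak{E}^{-1}\sim|p|/C_E$) makes $|\nabla\log\mathfrak{E}^{-1}|\sim 1/|p|$, but since $\psi^2\sim|p|\varphi(0)^2$ there, the integrand $V\psi^2$ stays locally integrable in $\mathbb{R}^3$; at infinity $\mathfrak{E}^{-1}$ grows exponentially while $\nabla\log\mathfrak{E}^{-1}$ is bounded, so all boundary contributions vanish under the assumed integrability of $\varphi$ and $\nabla\varphi$ against $\mathfrak{E}^{-1}$. Once this justification is in place, the chain of identities above delivers the inequality with constant exactly $\tfrac14$.
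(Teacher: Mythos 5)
Your proof is correct and follows essentially the same route as the paper's appendix: the ground-state substitution $\psi=\mathfrak{E}^{-1/2}\varphi$, integration by parts of the cross term, and a pointwise lower bound on the resulting effective potential. The only differences are cosmetic — you merge the two potential pieces into the single elementary inequality $4e^{r}(e^{r}-1)\ge r$, whereas the paper bounds them separately (via $e^{2\omega}\ge(e^\omega-1)^2$ and the nonnegativity of the divergence term), and your radial Laplacian $\Delta|p|=2/|p|$ is in fact the correct value where the paper writes $3/|p|$ (a harmless slip there).
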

\begin{proposition}\label{Pro:Kinetic}
Suppose that $f_0$ be a positive function in $L^1(\mathbb{R}^3\times\mathbb{R}^3)\cap \mathcal{L}$ and $N_c(t,r)$ is bounded from above by $\overline{C}_N>0$ and from below by $\underline{C}_N>0$. Under the assumption that  $L=L_2$, Equation \eqref{System1}-\eqref{System1b} has a unique positive solution $f$, which decays exponentially in time towards the equilibrium $f_\infty$  in the following sense: there exist  $\mathcal{C}_1>0$ depending only on $\mathfrak{E}$, such that
\begin{equation}
\label{Prop:Kinetic:1}
\|f(t)-f_\infty\|_{\mathcal{L}}\le \|f_0-f_\infty\|_{\mathcal{L}}e^{-\mathcal{C}_1 t}. 
\end{equation}
Moreover, there also exists $\mathcal{C}_2>0$ depending only on $\mathfrak{E}$ such that
\begin{equation}
\label{Prop:Kinetic:2}
\|\rho[f](t)-M_{f_0}\|_{L^2_r(\mathbb{R}^3)}\le {\|f_0-f_\infty\|_{\mathcal{L}}}e^{-\mathcal{C}_2 t}.
\end{equation}
If $\|\nabla N_c\|_{L^\infty(\mathbb{R}^3)}$ is also bounded by a constant $C_N^*$, there exists $\mathcal{C}_3>0,\mathcal{C}_4>0$ depending on $\|f_0-f_\infty\|_{\mathcal{L}},C_N^*$, such that
 \begin{equation}
\label{Prop:Kinetic:3}
\|f(t)-f_\infty\|_{\mathcal{L}} \ + \ \|\nabla f(t)\|_{\mathcal{L}}\le 3\left(t \|\nabla N_c\|_{L^\infty_r}\|f_0\|_{\mathcal{L}} \ + \ \|\nabla f_0\|_{\mathcal{L}} \ +\ \|f_0\|_{\mathcal{L}}\right)e^{-\mathcal{C}_*t/2}. 
\end{equation}
\end{proposition}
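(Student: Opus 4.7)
The plan is to mirror the overall structure used for $L_1$ in Proposition~\ref{PropNoE:Kinetic}, but exploit the Fokker–Planck–type dissipation of $L_2$ together with the weighted Poincaré inequality of Lemma~\ref{Lemma:Poincare} in place of the more delicate hypocoercive energy $\mathcal{E}[f]$. Existence and uniqueness of a positive solution in $\mathcal{L}$ follow from standard parabolic semigroup theory applied to the operator $p\cdot\nabla_r - N_c L_2$, exactly as in the classical argument of Fitzgibbon already invoked in the proof of Proposition~\ref{PropNoE:Kinetic}; positivity is preserved because $L_2$ is a degenerate elliptic operator with $L_2[f_\infty]=0$, so the equation for $f$ has the Bose–Einstein Maxwellian $f_\infty$ as a stationary solution.

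For \eqref{Prop:Kinetic:1}, I would set $g:=f-f_\infty$ and note that, since $f_\infty$ depends only on $p$ and $L_2[f_\infty]=0$, $g$ satisfies
\begin{equation*}
\partial_t g + p\cdot\nabla_r g \ = \ N_c\,\mathfrak{E}^{-1}\nabla_p\!\left(\mathfrak{E}^{-1}\nabla_p g\right).
\end{equation*}
Testing against the natural multiplier dictated by the $\mathcal{L}$-norm and integrating by parts in $p$ kills the transport term and turns the collision term into a nonpositive quadratic form. Using the lower bound $N_c\ge \underline{C}_N$, one arrives at a differential inequality of the form
\begin{equation*}
\frac{d}{dt}\|g(t)\|_{\mathcal{L}}^{2} \ \le \ -2\underline{C}_N\int_{\mathbb{R}^3\times\mathbb{R}^3}\mathfrak{E}^{-1}(p)\,|\nabla_p g|^{2}\,dr\,dp.
\end{equation*}
Applying Lemma~\ref{Lemma:Poincare} pointwise in $r$ to the $p$–integral provides the coercivity bound $\int\mathfrak{E}^{-1}|\nabla_p g|^{2}dp \ge \tfrac14 \int\mathfrak{E}^{-1}|g|^{2}dp$, and Grönwall delivers the exponential decay with rate $\mathcal{C}_1$ depending only on $\underline{C}_N$ and the Poincaré constant (hence on $\mathfrak{E}$). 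Estimate \eqref{Prop:Kinetic:2} is then a direct consequence: by Cauchy–Schwarz and $\int\mathfrak{E}\,dp=1$, $\|\rho[f]-M_{f_0}\|_{L^2_r}^{2}\le \|g\|_{\mathcal{L}}^{2}$, inheriting the exponential rate of \eqref{Prop:Kinetic:1}.

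For the gradient estimate \eqref{Prop:Kinetic:3} I would differentiate the kinetic equation in $r$:
\begin{equation*}
\partial_t(\nabla_r f) + p\cdot\nabla_r(\nabla_r f) \ = \ N_c\, L_2[\nabla_r f] + (\nabla_r N_c)\, L_2[f],
\end{equation*}
and repeat the energy estimate, this time with a forcing term proportional to $\|\nabla_r N_c\|_{L^\infty_r}\,\|L_2[f]\|_{\mathcal{L}}$. The bound $\|\nabla_r N_c\|_{L^\infty_r}\le C_N^*$ controls the coupling constant, while $\|L_2[f]\|_{\mathcal{L}}$ is estimated back through the already–proven exponential decay of $\|f-f_\infty\|_{\mathcal{L}}$ (the $L_2$ operator maps the decay of $g$ into decay of its image modulo $\nabla_p$-weights, using integration by parts against the test function). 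Combining the two energy inequalities for $\|g\|_{\mathcal{L}}^{2}+\|\nabla_r f\|_{\mathcal{L}}^{2}$, the Poincaré inequality again yields coercivity, and the forcing term, being itself exponentially small in $t$, produces at most a linear-in-$t$ factor before being absorbed into a slightly smaller exponential rate $\mathcal{C}_\ast/2$; this is exactly the shape of the bound in \eqref{Prop:Kinetic:3}.

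The main obstacle is Step 3, i.e.\ handling the source term $(\nabla_r N_c)\,L_2[f]$ without losing all the exponential decay: one must show that $\|L_2[f](t)\|_{\mathcal{L}}$ does not blow up and in fact inherits an exponential bound from $\|g(t)\|_{\mathcal{L}}$, which requires either a short regularization argument on $L_2$ or a carefully chosen modified energy that couples $\|g\|_{\mathcal{L}}^{2}$ with $\|\nabla_r f\|_{\mathcal{L}}^{2}$ and allows the cross-term to be absorbed using Young's inequality and the Poincaré lower bound. Once this is in hand, Grönwall closes the argument and gives the stated constants $\mathcal{C}_3,\mathcal{C}_4$ depending on $\|f_0-f_\infty\|_{\mathcal{L}}$ and $C_N^\ast$.
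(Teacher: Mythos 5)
Your treatment of existence/uniqueness, of the decay \eqref{Prop:Kinetic:1} (energy identity in $\mathcal{L}$, lower bound $N_c\ge\underline{C}_N$, the weighted Poincar\'e inequality of Lemma \ref{Lemma:Poincare}, Gr\"onwall), and of \eqref{Prop:Kinetic:2} (Cauchy--Schwarz with $\int\mathfrak{E}\,dp=1$) coincides with the paper's argument. The divergence, and the problem, is in your derivation of the gradient bound \eqref{Prop:Kinetic:3}.

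There you bound the source term $(\nabla_r N_c)\,L_2[f]$ in the energy estimate for $\nabla_r f$ by $\|\nabla_r N_c\|_{L^\infty_r}\,\|L_2[f]\|_{\mathcal{L}}$ and then assert that $\|L_2[f](t)\|_{\mathcal{L}}$ ``inherits'' exponential decay from $\|f(t)-f_\infty\|_{\mathcal{L}}$. This step fails as stated: $L_2$ is an unbounded operator (second order in $p$, with the exponentially large weight $\mathfrak{E}^{-1}$ in its coefficients), so decay of the $\mathcal{L}$-norm of $f-f_\infty$ gives no control whatsoever of $\|L_2[f]\|_{\mathcal{L}}$; indeed under the stated hypotheses ($f_0\in L^1\cap\mathcal{L}$, $\nabla_r f_0\in\mathcal{L}$) this quantity need not even be finite at $t=0$, and nothing in the proof of \eqref{Prop:Kinetic:1} propagates two weighted $p$-derivatives. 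You flag this yourself as the ``main obstacle'' and sketch two possible repairs, but neither is carried out, so the proposal does not close. The paper avoids the issue by never estimating $L_2[f]$ in norm: setting $g_i=\partial_{r_i}f$ and testing the differentiated equation with $g_i\mathfrak{E}^{-1}$, the cross term is kept in paired (weak) form and, after one integration by parts in $p$, involves only first-order quantities, so it is bounded by $2\|\nabla N_c\|_{L^\infty_r}\|f\|_{\mathcal{L}}\|g_i\|_{\mathcal{L}}$ (with the $\nabla_p$-contributions absorbed into the dissipation produced by $N_c L_2[g_i]$ via Young and Lemma \ref{Lemma:Poincare}). Inserting the already-proved decay $\|f\|_{\mathcal{L}}\le\|f_0\|_{\mathcal{L}}e^{-\underline{C}_N t/4}$ gives the differential inequality
\begin{equation*}
\frac{d}{dt}\|g_i\|_{\mathcal{L}}^2 \ + \ \frac{\underline{C}_N}{2}\|g_i\|_{\mathcal{L}}^2 \ \le \ 2\|\nabla N_c\|_{L^\infty_r}\|f_0\|_{\mathcal{L}}\,e^{-\underline{C}_N t/4}\,\|g_i\|_{\mathcal{L}},
\end{equation*}
whose forcing decays at exactly the homogeneous rate; solving it produces the linear-in-$t$ prefactor of \eqref{Prop:Kinetic:3}, with no need for the rate-halving heuristic you invoke. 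Your ``modified energy'' alternative is essentially this correct mechanism, but it must actually be executed; as written, the route through $\|L_2[f]\|_{\mathcal{L}}$ is a genuine gap.
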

\begin{proof}
Similar as in Proposition \ref{PropNoE:Kinetic}, the existence and uniqueness result of the equation \eqref{System1} is classical. 
\\ We can  assume without loss of generality that $f_\infty=0$. 
Using $f\mathfrak{E}^{-1}$ as a test function in \eqref{System1} yields \begin{equation}\label{Prop:Kinetic:E2}
\frac{d}{dt} \int_{\mathbb{R}^3\times\mathbb{R}^3}|f|^2\mathfrak{E}^{-1}dpdr \ = 
 - \int_{\mathbb{R}^3\times\mathbb{R}^3}2 N_c|\nabla f|^2 \mathfrak{E}^{-1}dpdr.
\end{equation}
It follows directly from Lemma \ref{Lemma:Poincare} that 
\begin{equation}\label{Prop:Kinetic:E3}
 \  2\int_{\mathbb{R}^3\times\mathbb{R}^3} N_c|\nabla f|^2 \mathfrak{E}^{-1}dpdr  \ \ge \ \frac{\underline{C}_N}{2}\int_{\mathbb{R}^3\times\mathbb{R}^3}|f|^2\mathfrak{E}^{-1}dpdr
\end{equation}
 Putting together the two Inequalities yields  
\begin{equation}\label{Prop:Kinetic:E3a}
\|f\|_{\mathcal{L}}\ \le \ \|f_0\|_{\mathcal{L}} e^{-\underline{C}_Nt/4}.
\end{equation}
The second  decay estimate \eqref{Prop:Kinetic:2} can be proved by H\"older inequality as for \eqref{PropNoE:Kinetic:2} of Proposition \ref{PropNoE:Kinetic}. 
\\ 
We now prove the third decay estimate \eqref{Prop:Kinetic:3}. Defining $g_i=\partial_{r_i} f$, where $r_i$ is one of the component of the space variable $r=(r_1,r_2,r_3)\in\mathbb{R}^3$, and taking $g_i\mathfrak{E}^{-1}$ as a test function, we find
\begin{equation}\begin{aligned}\label{Prop:Kinetic:E4a}
\frac{d}{dt}\int_{\mathbb{R}^3\times \mathbb{R}^3}|g_i|^2\mathfrak{E}^{-1} drdp   \ \le &\ -\frac{\underline{C}_N}{2}\int_{\mathbb{R}^3\times\mathbb{R}^3}|f|^2\mathfrak{E}^{-1}dpdr   \ +\ 2\int_{\mathbb{R}^3\times \mathbb{R}^3}\partial_i N_c fg_i \mathfrak{E}^{-1} drdp.
\end{aligned}
\end{equation}
Now, we can estimate the second term on the right hand side of \eqref{Prop:Kinetic:E4a} as follows
\begin{equation*}\begin{aligned}
2\int_{\mathbb{R}^3\times \mathbb{R}^3}\partial_i N_c  fg_i \mathfrak{E}^{-1} drdp \ \le & \ 2\|\nabla N_c\|_{L^\infty_r}\|f\|_{\mathcal{L}}\|g_i\|_{\mathcal{L}},
\end{aligned}
\end{equation*}
which, together with \eqref{Prop:Kinetic:E4a}  leads to
\begin{equation}\label{Prop:Kinetic:E5}\begin{aligned}
\frac{d}{dt} \int_{\mathbb{R}^3\times \mathbb{R}^3}|g_i|^2\mathfrak{E}^{-1} drdp  \ +\ \frac{\underline{C}_N}{2}\int_{\mathbb{R}^3\times\mathbb{R}^3}|f|^2\mathfrak{E}^{-1}dpdr \ \le &\ \ 2\|\nabla N_c\|_{L^\infty_r}\|f\|_{\mathcal{L}}\|g_i\|_{\mathcal{L}}.
\end{aligned}
\end{equation}
Plugging the decay estimate \eqref{Prop:Kinetic:E3a} into \eqref{Prop:Kinetic:E5} implies
\begin{equation}\label{Prop:Kinetic:E5a}\begin{aligned}
\frac{d}{dt} \|g_i\|_{\mathcal{L}}^2  \ +\ \frac{\underline{C}_N}{2}\|g_i\|_{\mathcal{L}}^2 \ \le &\ 2\|\nabla N_c\|_{L^\infty_r}\|f_0\|_{\mathcal{L}} e^{-\underline{C}_N t/4}\|g_i\|_{\mathcal{L}},
\end{aligned}
\end{equation}
which yields
\begin{equation}\label{Prop:Kinetic:E6}\begin{aligned}
\|g_i\|_{\mathcal{L}} \ \le \ \left(t \|\nabla N_c\|_{L^\infty_r}\|f_0\|_{\mathcal{L}} \ + \ \|g_i(0)\|_{\mathcal{L}}\right)e^{-\underline{C}_N t/4}.
\end{aligned}
\end{equation}
As a consequence, \eqref{Prop:Kinetic:3} follows.
\end{proof}
  \section{The defocusing cubic nonlinear Schr\"odinger equation}\label{Sec:NLS}
In this section, we consider the scattering theory for the following defocusing cubic nonlinear Schr\"odinger equation
\begin{eqnarray}\label{Sec:NLS:E1}
i  \frac{\partial \Psi({r},t)}{\partial t} \ &=& \ \Big(-{ \Delta_{{r}}} \ + \ |\Psi({r},t)|^2 + W(t,r) \Big)\Psi({r},t),\\\label{Sec:NLS:E2}
\Psi(0,r)&=&\Psi_0(r), \forall r\in\mathbb{R}^3,
\end{eqnarray}
where 
\begin{equation}
\label{Sec:NLS:E3a}
\|V(t,\cdot)\|_{H^1_r(\mathbb{R}^3)} \ = \ \|W(t,\cdot)+1\|_{H^1_r(\mathbb{R}^3)} \ \le  \ \mathfrak{C}_1 e^{-\mathfrak{C}_2 t}, 
\end{equation}
and 
\begin{equation}
\label{Sec:NLS:E3b}
\|V(t,\cdot)+1\|_{L^1_r(\mathbb{R}^3)}  \ \le  \ \mathfrak{C}_3, 
\end{equation}
with some  positive constants $\mathfrak{C}_1, \mathfrak{C}_2, \mathfrak{C}_3$.
\\ From \eqref{Sec:NLS:E3a} and \eqref{Sec:NLS:E3b}, we deduce that
\begin{equation}
\label{Sec:NLS:E3c}
\|V(t,\cdot)\|_{L^{3/2}_r(\mathbb{R}^3)} \ = \ \|W(t,\cdot)+1\|_{L^{3/2}_r(\mathbb{R}^3)} \ \le  \ \mathfrak{C}_4 e^{-\mathfrak{C}_5 t}, 
\end{equation}
with some  positive constants $\mathfrak{C}_4, \mathfrak{C}_5$.
\\ We denote the Fourier transform on $\mathbb{R}^3$ by
\begin{equation}\label{Prepro:NLS:E1}
\begin{aligned}
\mathcal{F}\varphi \ = \ \hat{\varphi}(\zeta) \ := &
\ \int_{\mathbb{R}^3}\varphi(r)e^{-ir\zeta}dr,\\
\mathcal{F}_r^\zeta[f(r,r')] \ = & \ \left(\mathcal{F}_r^\zeta\right)(\zeta,r') \ := \ \int_{\mathbb{R}^3}f(r,r')e^{-ir\zeta}dr,
\end{aligned}
\end{equation}
as well as the Fourier multiplier 
\begin{equation}\label{Prepro:NLS:E2}
\begin{aligned}
\varphi(-i\nabla)f \ := &\ \mathcal{F}^{-1}[\varphi(\zeta)\hat{f}(\zeta)],\\
\varphi(-i\nabla)_rf(r,r') \ := &\ (\mathcal{F}_r^\zeta)^{-1}[\varphi(\zeta)\mathcal{F}^\zeta_r[f(r,r')]].
\end{aligned}
\end{equation}
Next, we define the standard Littlewood-Paley decomposition. Let $\chi$ be a fixed cut-off function $\chi\in C_0^\infty(\mathbb{R})$ satisfying $\chi(r)=1$ for $|r|\le 1$ and $\chi(r)=0$ for $|r|\geq 2$. Define for each $k\in 2^\mathbb{Z}$ the function
\begin{equation}\label{Prepro:NLS:E3}
\chi^k(r) \ := \ \chi(|r|/k) \ - \ \chi(2|r|/k),
\end{equation}
such that $\chi^k\in C_0^\infty(\mathbb{R}^3)$ and
\begin{equation}\label{Prepro:NLS:E4}
\mathrm{supp} \chi^k \ \subset \ \{k/2 \ < \ |r| \ < \ 2k\},\ \ \ \ \ \ \ \ \ \sum_{k\in 2^{\mathbb{Z}}} \chi^k(r) \ = \ 1\ \ \ (r\ne 0). 
\end{equation}
The Littlewood-Paley decomposition is then defined as follows 
\begin{equation}\label{Prepro:NLS:E5}
f \ = \ \sum_{k\in 2^{\mathbb{Z}}} \chi^k(\nabla) f,
\end{equation}
which leads to the following decomposition into lower and higher frequencies
\begin{equation}\label{Prepro:NLS:E6}
f_{<k} \ := \ \sum_{j<k} \chi^j(\nabla) f,\ \ \ \  f_{\ge k} \ := \ \sum_{j\ge k} \chi^j(\nabla) f.
\end{equation}
For any function $B(\zeta_1,\dots,\zeta_N)$ on $(\mathbb{R}^3)^N$, we define the $N$-multilinear operator $B[f_1,\dots,f_N]$ 
\begin{equation}\label{Prepro:NLS:E6a}
\mathcal{F}^\zeta_rB[f_1,\dots,f_N] \ : = \ \int_{\zeta=\zeta_1+\dots +\zeta_N}B(\zeta_1,\dots,\zeta_N)\hat{f}_1(\zeta_1)\dots\hat{f}_N(\zeta_N)d\zeta_1\dots\zeta_N.
\end{equation}
The above operator is known as a multilinear Fourier multiplier with symbol $B$.\\
Let us also recall inequality $(2.20)$ in \cite{GustafsonNakanishiTsai:2009:STF}: 
\\ For $k\in\mathbb{N}$ and $$\frac{1}{p_0}=\frac{1}{p_1}+\frac{1}{p_2} ,  p_0,p_1,p_2\in (1,\infty),$$ the following inequality holds true
\begin{equation}\label{Pro:NLS:E15a}
\sup_{a\in[0,1]}\left\|\frac{\langle\zeta_1\rangle^{2k(1-a)}\langle\zeta_2\rangle^{2ka}}{\langle(\zeta_1,\zeta_2)\rangle^{2k}}[f,g]\right\|_{L^{p_0}_r(\mathbb{R}^3)} \ \lesssim\ \|f\|_{L_r^{p_1}(\mathbb{R}^3)}\|g\|_{L_r^{p_2}(\mathbb{R}^3)}.
\end{equation}
\\ Define 
\begin{equation}\label{Sec:NLS:E4}
u \ = \ \Psi \  - \ 1 \ = u_1 \ - \ i u_2,
\end{equation}
we obtain the following system of equations whose solution is $(u_1,u_2)$
\begin{equation}\label{Sec:NLS:E5}
\begin{aligned}
\dot{u}_1 \ = & \ -\Delta u_2 \ + \ 2u_1u_2 \ + \ |u|^2u_2 \ + V u_2, \\
\dot{u}_2 \ = & \ -(2-\Delta) u_1 \ - \ 3u_1^2 \ - \ u_2^2 \ - \ |u|^2u_1 \ - \ V (u_1+1).
\end{aligned}
\end{equation}
We define 
\begin{equation}\label{Sec:NLS:E6}
v \ = \ u_1 \ + \  i U u_2, \ \ U \ = \ \sqrt{-\Delta (\ 2 \ - \ \Delta)^{-1}},
\end{equation}
and obtain the following equation for $v$, 
\begin{equation}\label{Sec:NLS:E7}
i\partial_t v \ - \ Hv \ = \ U(3u_1^2\ + \ u_2^2 \ + \ |u|^2u_1) \ + \ i(2u_1u_2 \ + \ |u|^2u_2),
\end{equation}
where
\begin{equation}\label{Sec:NLS:E8}
H \ = \ \sqrt{-\Delta (\ 2 \ - \ \Delta)}.
\end{equation}
For any number or vector $\zeta$, let us define
\begin{equation}\label{Sec:NLS:E9}
\langle\zeta\rangle \ := \ \sqrt{2 \ +  \ |\zeta|^2},\ \ \ U(\zeta) \ := \ \frac{|\zeta|}{\zeta}, \ \ \  H(\zeta) \ := \ |\zeta|\langle \zeta\rangle, \ \ \tilde{\zeta}\ := \ \frac{\zeta}{|\zeta|},
\end{equation}
which will appear normally in Fourier spaces, and the operators $U$ and $H$ in \eqref{Sec:NLS:E9} are the same with the ones defined in \eqref{Sec:NLS:E8} and \eqref{Sec:NLS:E6}. 
\begin{proposition}\label{Pro:NLS}
For $\delta>0$ small enough, such that for $\Phi_0\in H^1(\mathbb{R}^3)$ satisfying
\begin{equation}\label{Pro:NLS:1}
\int_{\mathbb{R}^3}\langle r\rangle^2 \left(|\mathrm{Re}\Phi_0(r)|^2+|\nabla \Phi_0(r)|^2\right)dr<\delta^2,
\end{equation} 
the Equation \eqref{Sec:NLS:E1} has a unique global solution $\Psi_\delta=1+u$ such that  for $v:=\mathrm{Re} u +iU\mathrm{Im}u$, we have $e^{itH}v\in C(\mathbb{R}; \langle r\rangle^{-1}H^1_r(\mathbb{R}^3))$. Moreover,
\begin{equation}\label{Pro:NLS:2}
\|v(t)-e^{-itH}v_+\|_{H^1_r} \ \le \frac{M_1(\mathfrak{C}_1,\delta)}{\sqrt{t+1}}, \  \|\langle r\rangle \left[e^{itH}v(t)-v_+\right]\|_{H^1_r} \le M_0(t,\mathfrak{C}_1,\delta)\to 0,
\end{equation}
as $t\to 0$, for some $v_+\in \langle r\rangle^{-1} H^1_r(\mathbb{R}^3)$.\\
Define $u=u_{1}+ iu_{2}$, we also have
\begin{equation}\label{Pro:NLS:3}
\|u_{1}(t)\|_{L^\infty_r}\leq \frac{M_2(\mathfrak{C}_1,\delta)}{{t+1}},~~~~\|u_{2}(t)\|_{L^\infty_r}\leq \frac{M_3(\mathfrak{C}_1,\delta)}{{(t+1)}^{9/10}}.\end{equation}
Moreover, for fixed $\mathfrak{C}_1$, the three functions $M_0,M_1, M_2, M_3$ are  decreasing  in $\delta$ and tend to $0$ as $\delta$ and $\mathfrak{C}_1$ tend to $0$.
\end{proposition}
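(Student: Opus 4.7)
The plan is to adapt the scattering theory for the defocusing cubic NLS around the nontrivial background $\Psi\equiv 1$ developed in \cite{GustafsonNakanishiTsai:2009:STF} to the present setting, in which an external time-dependent potential $V$ and a corresponding source term appear in the equation for $v$, using the fact that \eqref{Sec:NLS:E3a}--\eqref{Sec:NLS:E3c} provide exponential time decay of $V$ in both $H^1_r$ and $L^{3/2}_r$. First I would rewrite \eqref{Sec:NLS:E7} in its complete form, incorporating the contribution of $V$:
\begin{equation*}
i\partial_t v - Hv \ = \ \mathcal{N}(u) + \mathcal{S}(u,V),
\end{equation*}
where $\mathcal{N}(u) = U(3u_1^2 + u_2^2 + |u|^2u_1) + i(2u_1u_2 + |u|^2u_2)$ gathers the quadratic and cubic self-interactions, and $\mathcal{S}(u,V) = UV(u_1+1) + iVu_2$ encodes both the linear-in-$u$ coupling to $V$ and a pure source term $UV$ independent of $u$.

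I would then pass to Duhamel's formula and seek a fixed point in a resolution space of the form
\begin{equation*}
\|v\|_X \ := \ \sup_{t\ge 0}\Big[\,\|\langle r\rangle\, e^{itH}v(t)\|_{H^1_r} \ + \ (1+t)\|u_1(t)\|_{L^\infty_r} \ + \ (1+t)^{9/10}\|u_2(t)\|_{L^\infty_r}\,\Big],
\end{equation*}
controlled a priori by $\delta$ via \eqref{Pro:NLS:1} and by $\mathfrak{C}_1$ via \eqref{Sec:NLS:E3a}--\eqref{Sec:NLS:E3c}. The key linear inputs are the dispersive estimate for $e^{-itH}$ (which interpolates between wave-type behavior at $|\zeta|\ll 1$ and Schr\"odinger-type at $|\zeta|\gg 1$), Strichartz estimates, and the weighted bound $\|\langle r\rangle e^{-itH}F\|_{H^1_r} \lesssim \|\langle r\rangle F\|_{H^1_r} + \langle t\rangle \|F\|_{H^1_r}$, all available from \cite{GustafsonNakanishiTsai:2009:STF}. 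The quadratic terms in $\mathcal{N}$ are not integrable in time under the critical pointwise decay in three dimensions, so a normal form transformation $v \mapsto v + B[v,v]$ with a carefully chosen bilinear Fourier symbol is needed to convert $2u_1u_2$ and $3u_1^2+u_2^2$ into genuinely cubic terms, which are then integrable via the bilinear multiplier bound \eqref{Pro:NLS:E15a}.

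For the $V$-dependent contributions I would exploit \eqref{Sec:NLS:E3a}--\eqref{Sec:NLS:E3c}: every Duhamel integral involving $\mathcal{S}$ is estimated by H\"older in time against the factor $\mathfrak{C}_1 e^{-\mathfrak{C}_2 s}$, producing a bound of order $\mathfrak{C}_1(1+\|v\|_X)$ in each of the three norms defining $X$. Combining these nonlinear and linear pieces, a standard continuity-plus-contraction argument closes a bootstrap provided $\delta + \mathfrak{C}_1$ is sufficiently small, and quantitatively yields constants $M_0,M_1,M_2,M_3$ that are decreasing functions of $\delta$ and tend to $0$ as $(\delta,\mathfrak{C}_1)\to 0$.

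Asymptotic completeness, namely the existence of $v_+\in\langle r\rangle^{-1}H^1_r$ with $\|\langle r\rangle(e^{itH}v(t)-v_+)\|_{H^1_r}\to 0$ and $\|v(t)-e^{-itH}v_+\|_{H^1_r}\le M_1(1+t)^{-1/2}$, follows by writing
\begin{equation*}
e^{itH}v(t)-e^{it'H}v(t') \ = \ -i\int_{t'}^t e^{isH}\big[\mathcal{N}(u(s))+\mathcal{S}(u(s),V(s))\big]\,ds,
\end{equation*}
and checking that this difference is Cauchy in $\langle r\rangle^{-1}H^1_r$ using the weighted linear estimate above together with the decay already absorbed into $\|v\|_X$. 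The main obstacle I anticipate is the interplay between the inhomogeneous source $UV$ and the quadratic resonances in $\mathcal{N}$: the source alone induces a baseline correction in $v$ of size $\mathcal{O}(\mathfrak{C}_1)$ which, when reinserted into the quadratic part of the nonlinearity, threatens to destroy the integrability structure exploited by the normal form. Resolving this requires carrying the exponential decay of $V$ through every bilinear estimate and verifying that the normal form transformation of \cite{GustafsonNakanishiTsai:2009:STF} remains stable under the addition of a time-decaying source of this type; this is what forces both the smallness of $\delta$ and the decay of $V$ to enter symmetrically in the final constants $M_i(\mathfrak{C}_1,\delta)$.
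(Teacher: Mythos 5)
Your proposal follows essentially the same route as the paper: both adapt the Gustafson--Nakanishi--Tsai scattering analysis around $\Psi\equiv 1$ via a normal form transformation, Duhamel's formula in GNT-type weighted/Strichartz spaces, and a smallness bootstrap, with every $V$-dependent term controlled through the exponential decay \eqref{Sec:NLS:E3a}--\eqref{Sec:NLS:E3c}. The only cosmetic difference is that you keep the direct source terms $UV(u_1+1)+iVu_2$ explicit in the $v$-equation and build the $L^\infty$ decay rates into your resolution norm, whereas the paper passes to $Z=v+b(u)$ and estimates the resulting $V$-dependent bilinear terms $\mathcal{M}$ in the norms $X$ and $S$ before invoking the bootstrap of \cite{GustafsonNakanishiTsai:2009:STF}.
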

\begin{proof}
\\ Similar as in \cite{GustafsonNakanishiTsai:2009:STF}, we also define
\begin{equation}\label{Sec:NLS:E10}
Z\ := \ v \ + \ b(u) \ := \ v \ - \ \langle (\zeta_1,\zeta_2) \rangle^{-2}[u_1,u_1] \ + \ \langle (\zeta_1,\zeta_2) \rangle^{-2}[u_2,u_2],
\end{equation}
and obtain the following equation for $Z$ by the {\it normal form transformation}
\begin{equation}\label{Sec:NLS:E11}
i\dot{Z} \ - \ HZ \ = \ \mathcal{N}_Z(v) \ + \ \mathcal{M}(v),
\end{equation}
in which 
\begin{equation}\label{Sec:NLS:E11a}
\mathcal{M}(v) \ = \  -2\langle(\zeta_1,\zeta_2)\rangle^{-2}[u_1,Vu_2] \ - 2 \langle (\zeta_1,\zeta_2)\rangle^{-2} [u_2,V(u_1+1)],
\end{equation}
and the nonlinear term $\mathcal{N}_Z(v)$ is of the following form
\begin{equation}\label{Sec:NLS:E12}
\begin{aligned}
\mathcal{N}_Z(v) \ :=& \ B_1[v_1,v_1] \ + \  B_2[v_2,v_2] \ + \ C_1[v_1,v_1,v_1] \ + \ C_2[v_2,v_2,v_1]\\
& \ + \ i C_3[v_1,v_1,v_2] \ + \ i C_4[v_2,v_2,v_2]  \ + \ iQ_1[u],
\end{aligned}
\end{equation}
with the following definitions for $B_1$ and $B_2$ in the Fourier space 
\begin{equation}\label{Sec:NLS:E13}
\begin{aligned}
B_1(\zeta_1,\zeta_2)  \ = &\ \frac{-2U(\zeta_1 \ + \ \zeta_2) (4 \ + \ 4|\zeta_1|^2 \ + \ 4|\zeta_2|^2 \ - \ \zeta_1\zeta_2)}{2 \ +  \ |\zeta_1|^2 \ + \ |\zeta_2|^2},
\\
B_2(\zeta_1,\zeta_2)  \ = &\ \frac{-2U(\zeta_1 \ + \ \zeta_2)\langle\zeta_1\rangle\langle\zeta_2\rangle\tilde{\zeta_1}\tilde{\zeta_2}}{2 \ +  \ |\zeta_1|^2 \ + \ |\zeta_2|^2},
\end{aligned}
\end{equation}
and the cubic multipliers are defined in the Fourier space as follows
\begin{equation}\label{Sec:NLS:E14}
\begin{aligned}
C_1(\zeta_1 \ + \ \zeta_2,\zeta_1,\zeta_2)  \ =& \ U(\zeta_1 \ + \ \zeta_2), \ \ C_2(\zeta_1 \ + \ \zeta_2,\zeta_1,\zeta_2) \ = \ U(\zeta_1 \ + \ \zeta_2)U(\zeta_1)^{-1}U(\zeta_2)^{-1},\\
C_3(\zeta_1,\zeta_2,\zeta_3) \ =& \ U(\zeta_3)^{-1}\left(1 \ - \  \frac{4}{2 \ + \ |\zeta_1|^2 \ + \ |\zeta_2+\zeta_3|^2} -  \frac{6}{2 \ + \ |\zeta_1+\zeta_2|^2  + |\zeta_3|^2}\right),\\
C_4(\zeta_1,\zeta_2,\zeta_3) \ =& \ U(\zeta_3)^{-1}\left(1 \  - \  \frac{2}{2 \ + \ |\zeta_1+\zeta_2|^2 \ + \ |\zeta_3|^2}\right).\
\end{aligned}
\end{equation} 
Moreover, $Q_1$ is of the following form
\begin{equation}\label{Sec:NLS:E15}
Q_1(u) \ = \ -2\langle(\zeta_1,\zeta_2)\rangle^{-2}[u_1,|u|^2u_2] \ - 2 \langle (\zeta_1,\zeta_2)\rangle^{-2} [u_2,|u|^2(u_1+1)].
\end{equation}
For any complex-valued function $f$, set 
\begin{equation}\label{Sec:NLS:E16}
Jf \ = \ e^{-itH} r e^{itH}f.
\end{equation}
Now, our function spaces can be set up as follows
\begin{equation}\label{Sec:NLS:E17}
\begin{aligned}
\|Z(t)\|_{X(t)} \ :=& \ \|Z(t)\|_{H_r^1} \ + \ \|JZ(t)\|_{H^1_r}, \ \ \|Z\|_X \ := \ \mathrm{sup}_t\|Z(t)\|_{X(t)},\\
\|Z\|_S \ :=& \ \|Z\|_{L^\infty_tH^1_r}\ + \ \|U^{-1/6}Z\|_{L^2_t H^{1,6}_r}.
\end{aligned}
\end{equation}
Fix a time $T$ large enough, by Duhamel formula, applied to \eqref{Sec:NLS:E11}, we find
\begin{equation}\label{Pro:NLS:E1}
\begin{aligned}
Z(t) \ = & \ e^{-iHt}Z(T) \ + \ \int_T^t e^{-iH(t-s)}\mathcal{N}_Z(s) ds \ + \ \int_T^t e^{-iH(t-s)}\mathcal{M}(s) ds. 
\end{aligned}
\end{equation}
By Inequalities $(5.3)$ and $(8.5)$ of \cite{GustafsonNakanishiTsai:2009:STF}, we have that
\begin{equation}\label{Pro:NLS:EstimateI0}
\begin{aligned}
\left\|\int_T^t e^{-i(t-s)H}\mathcal{N}_Z(s)ds\right\|_{X(T,\infty)}\  \lesssim & \ \langle T\rangle^{-\epsilon},\\
\left\|\int_T^t e^{-i(t-s)H}\mathcal{N}_Z(s)ds\right\|_{S(T,\infty)}\  \lesssim & \ T^{-1/2}\left(\|v\|_{X\cap S}^2 \ + \ \|v\|_{X\cap S}^4 \right),\\
\end{aligned}
\end{equation}
for some small $\epsilon>0$. 
\\ Now, we will estimate the left-over in the norm \eqref{Sec:NLS:E17}
\begin{equation}\label{Pro:NLS:E3}
I \ := \ \int_T^t e^{-iH(t-s)}\mathcal{M}(s) ds. 
\end{equation}
Let us define 
\begin{equation}
 I_1:=\|J\mathcal{M}\|_{L^\infty_t H^1_r},
\end{equation} which can be bounded as
\begin{equation}\label{Pro:NLS:E14}
\begin{aligned}
I_1 \  \lesssim &\  \left\|\int_T^t e^{-iH(t-s)}\left(r-s\nabla H(\zeta)\right)\mathcal{M}(s) ds\right\|_{L^\infty_tH^1_r}.
\end{aligned}
\end{equation}
By Strichartz inequality, the above inequality can be estimated as 
\begin{equation}\label{Pro:NLS:E15}
\begin{aligned}
I_1 \  \lesssim &\  \|r\mathcal{M}\|_{L^2_tH^{1,6/5}_r}  + \ \|t\mathcal{M}\|_{L^2_tH^{2,6/5}_r}.
\end{aligned}
\end{equation}
Using  \eqref{Pro:NLS:E15a} for $k=1$, $p_0=6/5$, $p_1=6$, $p_2=3/2$, we find
\begin{equation}\label{Pro:NLS:E15b}
\begin{aligned}
 \|r\mathcal{M}\|_{H^{1,6/5}_r} \  \lesssim &\ \|r u\|_{L^6_r }\|Vu\|_{L^{3/2}_r}\ + \ \|r u\|_{L^6_r}\|V\|_{L^{3/2}_r},
\end{aligned}
\end{equation}
which, by H\"older inequality, can be estimated as
\begin{equation}\label{Pro:NLS:E16}
\begin{aligned}
 \|r\mathcal{M}\|_{L^2_tH^{1,6/5}_r} \  \lesssim &\ \|r u\|_{L^\infty_t L^6_r }\|u\|_{L^\infty_tL^\infty_r}\|V\|_{L^2_t L^{3/2}_r}\ + \ \|r u\|_{L^\infty_t L^6_r }\|V\|_{L^2_t L^{3/2}_r}.
\end{aligned}
\end{equation}
Recall Inequality $(9.9)$ from \cite{GustafsonNakanishiTsai:2009:STF},
\begin{equation}\label{Pro:NLS:E17}
\|ru\|_{L^6_r} \ \le  \ \|v(t)\|_{X(t)},
\end{equation} 
and Inequality $(5.8)$ in \cite{GustafsonNakanishiTsai:2009:STF}
\begin{equation}\label{Pro:NLS:E9}
\begin{aligned}
\| |\nabla|^{-2+5\theta/3} v_{<1}(t)\|_{L^6_r} \ \lesssim & \ \mathrm{min}\left(1,t^{-\theta}\right)\|v(t)\|_{X(t)},\\
\| |\nabla|^{\theta} v_{\ge 1}(t)\|_{L^6_r} \ \lesssim & \ \mathrm{min}\left(t^{-\theta},t^{-1}\right)\|v(t)\|_{X(t)}, \ \ \ \forall\theta\in[0,1].
\end{aligned}
\end{equation}
Moreover, we also have
 \begin{equation}\label{Pro:NLS:E6}
 \begin{aligned}
\|V\|_{L^\infty_tL^2_r} \ \lesssim \ & \langle T\rangle^{-n},\\
\|\nabla V\|_{L^\infty_tL^2_r} \ \lesssim \ & \langle T\rangle^{-n}, \ \ \ \forall n>0.
\end{aligned}
\end{equation}
By using the boundedness of $u$, Inequalities \eqref{Pro:NLS:E9}, the decays \eqref{Sec:NLS:E3c}, \eqref{Pro:NLS:E6} of $V$ and \eqref{Pro:NLS:E17}, we deduce from \eqref{Pro:NLS:E16} that
\begin{equation}\label{Pro:NLS:E18}
\begin{aligned}
 \|r\mathcal{M}\|_{L^2_tH^{1,6/5}_r} \  \lesssim &\  \left(\|v\|_{X}+1\right)\langle T \rangle^{-n}, \ \ \forall n>0.
\end{aligned}
\end{equation}
Using  \eqref{Pro:NLS:E15a} for $k=1$, $p_0=6/5$, $p_1=6$, $p_2=3/2$, we find
\begin{equation}\label{Pro:NLS:E15b}
\begin{aligned}
 \|t\mathcal{M}\|_{H^{2,6/5}_r} \  \lesssim &\ \|t u\|_{L^6_r }\|Vu\|_{L^{3/2}_r}\ + \ \|t u\|_{L^6_r}\|V\|_{L^{3/2}_r},
\end{aligned}
\end{equation}
which, again by H\"older inequality, can be bounded as
 \begin{equation}\label{Pro:NLS:E19}
\begin{aligned}
 \|t\mathcal{M}\|_{L^2_tH^{1,6/5}_r} \  \lesssim &\ \|tu\|_{L^\infty_tL^6_r}\|u\|_{L^2_tL^6_r}\|V\|_{L^\infty_tL^2_r} \ +  \ \|tu\|_{L^\infty_tL^6_r}\|V\|_{L^2_tL^{3/2}_r} .
\end{aligned}
\end{equation}
Replacing $\theta=3/5$ and $\theta=0$ into \eqref{Pro:NLS:E9}, we can deduce that
\begin{equation}\label{Pro:NLS:E20}
\begin{aligned}
\| u_{<1}(t)\|_{L^6_r} \ \lesssim \ \| \nabla^{-1} v_{<1}(t)\|_{L^6_r}   \ \lesssim & \ \mathrm{min}\left(1,t^{-3/5}\right)\|v(t)\|_{X(t)},\\
\| u_{\ge 1}(t)\|_{L^6_r} \ \lesssim \ \| v_{\ge 1}(t)\|_{L^6_r} \ \lesssim & \ \mathrm{min}\left(1,t^{-1}\right)\|v(t)\|_{X(t)},
\end{aligned}
\end{equation}
which yields at once, for $t$ large
 \begin{equation}\label{Pro:NLS:E21}
 \begin{aligned}
\|u(t)\|_{L^6_r}\ \lesssim & \ \langle t\rangle^{-3/5}\\
\|u(t)\|_{L^2_tL^6_r}\ \lesssim & \ \langle t\rangle^{-1/10}.
\end{aligned}
\end{equation}
Using \eqref{Pro:NLS:E21} and \eqref{Pro:NLS:E6}, we find that
 \begin{equation}\label{Pro:NLS:E22}
\begin{aligned}
 \|t\mathcal{M}\|_{L^2_tH^{1,6/5}_r}  \
\lesssim & \ \langle T\rangle^{-n},\ \  \forall n>0.
\end{aligned}
\end{equation}
As a consequence, from \eqref{Pro:NLS:E15}, \eqref{Pro:NLS:E18}, \eqref{Pro:NLS:E22},  we deduce
\begin{equation}\label{Pro:NLS:EstimateI2}\begin{aligned}
I_1  
\ \le  & \ \langle T \rangle^{-n}\|v\|_X, \ \ \forall n>0.
\end{aligned}
\end{equation}
Now let us consider
\begin{equation}\label{Pro:NLS:E4a}
\begin{aligned}
I_2 \ := &\ \left\|\int_T^t e^{-iH(t-s)}\mathcal{M}(s) ds\right\|_{L^\infty_tH^1_r}.
\end{aligned}
\end{equation}
As a view of Strichartz estimate, we obtain
\begin{equation}\label{Pro:NLS:E4}
\begin{aligned}
I_2 \ 
 \lesssim &\ \||\nabla V||u|^2\|_{L^\infty_tL^2_r} \ + \  \||V||u|^2\|_{L^\infty_tL^2_r} \ + \||V||\nabla u||u|\|_{L^\infty_tL^2_r}\\\
\ & + \  \|V u\|_{L^\infty_tL^2_r}+ \  \|\nabla V u\|_{L^\infty_tL^2_r}+ \  \|V \nabla u\|_{L^\infty_tL^2_r},
\end{aligned}
\end{equation}
which, by H\"older inequality, can be estimated as 
\begin{equation}\label{Pro:NLS:E5}
\begin{aligned}
\left\|\int_T^t e^{-iH(t-s)}\mathcal{M}(s) ds\right\|_{L^\infty_tH^1_r} \ \lesssim &\ \|\nabla V\|_{L^\infty_tL^2_r} \|u\|_{L^\infty_tL^\infty_r}^2 \ + \  \|V\|_{L^\infty_tL^2_x} \|u\|_{L^\infty_tL^\infty_r}^2\\
\ &\ + \|V\|_{L^\infty_tL^3_r} \|u\|_{L^\infty_tL^\infty_r}\|\nabla u\|_{L^\infty_tL^6_r} \ + \  \|V\|_{L^\infty_tL^2_r} \|u\|_{L^\infty_tL^\infty_r}\\
\ &\ + \  \|\nabla V\|_{L^\infty_tL^2_r} \|u\|_{L^\infty_tL^\infty_x} \ + \ \|V\|_{L^\infty_t L^3_r}\|\nabla u\|_{L^\infty_tL^6_r}.
\end{aligned}
\end{equation}
Using the fact that $\|u\|_{L^\infty_tL^\infty_r}$ is bounded and \eqref{Pro:NLS:E6},
we obtain from \eqref{Pro:NLS:E5} that
\begin{equation}\label{Pro:NLS:E7}
\left\|\int_T^t e^{-iH(t-s)}\mathcal{M}(s) ds\right\|_{L^\infty_tH^1_r} \ \lesssim \   \langle T\rangle^{-n}\|\nabla u\|_{L^\infty_tL^6_r}\ + \   \langle T\rangle^{-n}, \ \ \ \forall n>0.
\end{equation}
We observe that 
\begin{equation}\label{Pro:NLS:E8}
 \begin{aligned}
\|\nabla u(t)\|_{L^6_x} \ = &\ \|\nabla u_{\ge 1}(t)\|_{L^6_x} \ + \ \|\nabla u_{< 1}(t)\|_{L^6_x}\\
\ \approx	 &\ \|\nabla v_{\ge 1}(t)\|_{L^6_r} \ + \ \|v_{< 1}(t)\|_{L^6_r}.
\end{aligned}
\end{equation}
Using \eqref{Pro:NLS:E9} for $\theta=1$, we find
 \begin{equation}\label{Pro:NLS:E11}
\begin{aligned}
\| |\nabla|^{-1/3} v_{<1}(t)\|_{L^6_r}  \ + \ \| |\nabla| v_{\ge 1}(t)\|_{L^6_r}  \ \lesssim & \ \min\left(1,t^{-1}\right)\|v(t)\|_{X(t)} \ + \ t^{-1}\|v(t)\|_{X(t)},
\end{aligned}
\end{equation}
which, together with \eqref{Pro:NLS:E8}, leads to
\begin{equation}\label{Pro:NLS:E12}
\|\nabla u(t)\|_{L^6_r} \ \lesssim \mathrm{min}\left(1,t^{-1}\right)\|v(t)\|_{X(t)} \ + \ t^{-1}\|v(t)\|_{X(t)}.
\end{equation}
With Inequality \eqref{Pro:NLS:E12}, we can bound \eqref{Pro:NLS:E7} as 
\begin{equation}\label{Pro:NLS:EstimateI1}
I_2\ \lesssim \   \langle T\rangle^{-n}\|Z\|_{X}\ + \   \langle T\rangle^{-n}, \ \ \forall n>0.
\end{equation}
Finally, we  define
\begin{equation}\begin{aligned}\label{Pro:NLS:E26}
I_3 \ := & \ \left\|\int_T^t e^{-iH(t-s)}U^{-1/6}\mathcal{M}ds\right\|_{L^2_tH^{1,6}_r} 
\end{aligned}
\end{equation}
which can be bounded, by Strichartz estimate,  as
\begin{equation}\begin{aligned}\label{Pro:NLS:E26a}
I_3 \ \le  & \ \left\|U^{-1/6}\mathcal{M}\right\|_{L^2_tH^{1,6/5}_r}.
\end{aligned}
\end{equation}
Apply Inequality \eqref{Pro:NLS:E15a} for $k=1$, $p_0=6/5$, $p_1=6$, $p_2=3/2$, we find
\begin{equation}\begin{aligned}\label{Pro:NLS:E27}
I_3 \ \le  & \ \|V\|_{L^\infty_tL^{3/2}_r}\left(\|u\|_{L^2_tL^{6}_r}\ + \ \|u^2\|_{L^2_tL^{6}_r}\right)\\
\ \le  & \ \|V\|_{L^\infty_tL^{3/2}_r}\Big(\|u_{\ge 1}\|_{L^2_tL^{6}_r}\ + \ \|u_{< 1}\|_{L^2_tL^{6}_r}+ \ \|u^2_{\ge 1}\|_{L^{6}_r}\  \\
& + \ \|u^2_{< 1}\|_{L^{6}_r}\Big)\\
\ \le  & \ \|V\|_{L^\infty_tL^{3/2}_r}\Big(\|v_{\ge 1}\|_{L^2_tL^{6}_r}\ + \ \||\nabla|^{-1} v_{< 1}\|_{L^2_tL^{6}_r}+ \ \|v^2_{\ge 1}\|_{L^2_tL^{6}_r}\\
 & + \ \||\nabla|^{-1} v^2_{< 1}\|_{L^2_tL^{6}_r}\Big),  
\end{aligned}
\end{equation}
which, due to the fact that $\|v\|_{L^\infty_r}$ is bounded, can be estimated as
\begin{equation}\label{Pro:NLS:E28}\begin{aligned}
I_3  
\ \le  & \ \|V\|_{L^\infty_tL^{3/2}_r}\Big(\|v_{\ge 1}\|_{L^2_tL^{6}_r}\ + \ \||\nabla|^{-1} v_{< 1}\|_{L^2_tL^{6}_r}\Big).
\end{aligned}
\end{equation}
Using \eqref{Pro:NLS:E9} for $\theta=0, 3/5$ and taking into account the decay \eqref{Sec:NLS:E3c}, we obtain
\begin{equation}\label{Pro:NLS:EstimateI3}\begin{aligned}
I_3  
\ \le  & \ \langle T \rangle^{-n}\|v\|_X,\ \  \forall n>0.
\end{aligned}
\end{equation}
Taking into account the estimates \eqref{Pro:NLS:EstimateI0},  \eqref{Pro:NLS:EstimateI2}, \eqref{Pro:NLS:EstimateI1}, and  \eqref{Pro:NLS:EstimateI3}, by a bootstrap argument as in \cite{GustafsonNakanishiTsai:2009:STF}, the conclusion of the proposition then follows.  
\end{proof}
\section{Proof of Theorem \ref{Thm:Main}}
According to Proposition \ref{Pro:Kinetic}, for a given function $\Psi$ satisfying the assumption of the Proposition, there exists a unique global solution $f=\mathcal{F}_1[\Psi]$ to 
\begin{eqnarray}\label{MainProof:E01}
\frac{\partial f}{\partial t}+{p}\cdot\nabla_{{r}} f & = &
N_cL[f] :=  N_c\mathfrak{E}^{-1}(p)\nabla\left(\mathfrak{E}^{-1}(p) \nabla (f-f_\infty)\right), \\\nonumber
&&\ \ \ \ \ \ \ \ \ \ \ \ (t,r,p) \ \in \ \mathbb{R}_+\times\mathbb{R}^3\times\mathbb{R}^3,\\\label{MainProof:E03}
f(0,r,p)\ &=& \ f_0(r,p), (r,p)\in\mathbb{R}^3\times\mathbb{R}^3.
\end{eqnarray}
Moreover
  \begin{equation}
\label{MainProof:E04}
\|\mathcal{F}_1[\Psi](t)-f_\infty\|_{\mathcal{L}} \ + \ \|\nabla \mathcal{F}_1[\Psi](t)\|_{\mathcal{L}}\le \left(t \|\nabla N_c\|_{L^\infty_r}\|f_0\|_{\mathcal{L}} \ + \ \|\nabla f_0\|_{\mathcal{L}} \ +\ \|f_0\|_{\mathcal{L}}\right)e^{-\underline{C}_N t/4}. 
\end{equation}
We recall that $f_\infty$ is defined in \eqref{KineticEquilibrium}. Without loss of generality, we can suppose that $f_0$ is chosen such that $f_\infty=0$. Hence, \eqref{MainProof:E04} is reduced to 
  \begin{equation}
\label{MainProof:E05}
\|\mathcal{F}_1[\Psi](t)\|_{\mathcal{L}} \ + \ \|\nabla \mathcal{F}_1[\Psi](t)\|_{\mathcal{L}}\le \left(t \|\nabla N_c\|_{L^\infty_r}\|f_0\|_{\mathcal{L}} \ + \ \|\nabla f_0\|_{\mathcal{L}} \ +\ \|f_0\|_{\mathcal{L}}\right)e^{-\underline{C}_N t/4}. 
\end{equation}
Due to Proposition \ref{Pro:NLS}, for a given function $f$ satisfying the assumption of the Proposition, there exists a unique global solution $\Psi=\mathcal{F}_2[f]$ to 
\begin{eqnarray}\label{MainProof:E06}
i  \frac{\partial \Psi(t,{r})}{\partial t} \ &=& \ \Big(-{ \Delta_{{r}}} \ + \ |\Psi(t,{r})|^2 -1 \ + \  \rho[f] \ -M_{f_0} \Big)\Psi(t,{r}),  (t,r)\in \mathbb{R}_+\times\mathbb{R}^3,\\\label{MainProof:E07}
\Psi(0,r)&=&\Psi_0(r), \forall r\in\mathbb{R}^3.
\end{eqnarray}
Moreover
\begin{equation}\begin{aligned}\label{MainProof:E08}
\|\mathcal{F}_2[f](t)-1\|_{L^\infty_r}\ \leq & \ \frac{\mathcal{M}\left(\left(t \|\nabla N_c\|_{L^\infty_r}\|f_0\|_{\mathcal{L}} \ + \ \|\nabla f_0\|_{\mathcal{L}} \ +\ \|f_0\|_{\mathcal{L}}\right)e^{-\underline{C}_N t/4},\delta\right)}{{(t+1)}^{9/10}}\\
\|\mathcal{F}_2[f](t)-1\|_{H^1_r}\ \leq & \ \frac{\mathcal{M}\left(\left(t \|\nabla N_c\|_{L^\infty_r}\|f_0\|_{\mathcal{L}} \ + \ \|\nabla f_0\|_{\mathcal{L}} \ +\ \|f_0\|_{\mathcal{L}}\right)e^{-\underline{C}_N t/4},\delta\right)}{{(t+1)}^{1/2}},\end{aligned}
\end{equation}
where $\mathcal{M}=\max\{M_2+M_3,M_1\}$ and $M_1$, $M_2$, $M_3$ are defined in Proposition \ref{Pro:NLS}.
\\ In order to prove that \eqref{System1b}-\eqref{System5} has a unique global solution, it is sufficient to prove that the  function $\mathcal{F}=\mathcal{F}_1 o \mathcal{F}_2$ has a fixed point.
\\ We deduce from \eqref{MainProof:E04} and \eqref{MainProof:E08} that 
\begin{equation}\label{MainProof:E1}
\begin{aligned}
& \ \|\mathcal{F}[\Psi](t)-1\|_{L^\infty_r}\ \le \\
 \le & \ \ \frac{\mathcal{M}\left(\left(t \|\nabla N_c\|_{L^\infty_r}\|f_0\|_{\mathcal{L}} \ + \ \|\nabla f_0\|_{\mathcal{L}} \ +\ \|f_0\|_{\mathcal{L}}\right)e^{-\underline{C}_Nt/4},\delta\right)}{{(t+1)}^{9/10}}.
\end{aligned}
\end{equation}
Let us consider $\nabla_r N_c(t,r)$, which can be written as
\begin{equation}\label{MainProof:E2}
\nabla_r N_c(t,r) \ = \ C^*\int_{\mathbb{R}^3}\frac{r-r'}{|r-r'|}|\Psi(r',t)|^2 e^{-|r-r'|}dr'.
\end{equation}
Since $$\int_{\mathbb{R}^3}\frac{r-r'}{|r-r'|}e^{-|r-r'|}dr'=0,$$
we can rewrite the form of $\nabla_r N_c(t,r)$ as
\begin{equation}\label{MainProof:E3}
\nabla_r N_c(t,r) \ = \ C^*\int_{\mathbb{R}^3}\frac{r-r'}{|r-r'|}\left(|\Psi(r',t)|^2 -1 \right)e^{-|r-r'|}dr',
\end{equation}
whose sup-norm can be bounded as 
\begin{equation}\label{MainProof:E4}\begin{aligned}
\|\nabla_r N_c\|_{L^\infty_r} \ \le & \ {C^*}\||\Psi|^2-1\|_{L^1_r}\left\|e^{-|r'|^2}\right\|_{L^\infty_r} \\ 
\ \le & \ {C^*}\||\Psi|^2-1\|_{L^1_r}
\\ 
\ \le & \ {C^*}\|\Psi-1\|_{L^2_r}\|\Psi+1\|_{L^2_r}.
\end{aligned}
\end{equation}
Combining the above estimate and \eqref{MainProof:E1} yields
\begin{equation}\label{MainProof:E5}
\begin{aligned}
& \ \|\mathcal{F}[\Psi](t)-1\|_{L^\infty_r}\ \le \\
 \le & \ \ \frac{\mathcal{M}\left(\left({2t }\|\Psi-1\|_{L^2_r}\|\Psi+1\|_{L^2_r}\|f_0\|_{\mathcal{L}} \ + \ \|\nabla f_0\|_{\mathcal{L}} \ +\ \|f_0\|_{\mathcal{L}}\right)e^{-\underline{C}_N t/4},\delta\right)}{{(t+1)}^{9/10}}.
\end{aligned}
\end{equation}
Similarly, the following inequality also holds true
\begin{equation}\label{MainProof:E6}
\begin{aligned}
& \ \|\mathcal{F}[\Psi](t)-1\|_{H^1_r}\ \le \\
 \le & \ \ \frac{\mathcal{M}\left(\left({2t }\|\Psi-1\|_{L^2_r}\|\Psi+1\|_{L^2_r}\|f_0\|_{\mathcal{L}} \ + \ \|\nabla f_0\|_{\mathcal{L}} \ +\ \|f_0\|_{\mathcal{L}}\right)e^{-\underline{C}_N t/4},\delta\right)}{{(t+1)}^{9/10}}.
\end{aligned}
\end{equation}
 We deduce from the above that, for $\epsilon$ small enough and for $\|\nabla f_0\|_{\mathcal{L}}, \|f_0\|_{\mathcal{L}},\delta$ small correspondingly, the operator $\mathcal{F}$ maps the ball $B(1,\epsilon)$ of $L^2_r(\mathbb{R}^3)$ into a compact set of $B(1,\epsilon)$. As a consequence, it has a fixed point and the conclusion of the theorem follows by Propositions \ref{Pro:Kinetic} and \ref{Pro:NLS}.  

  ~~ \\{\bf Acknowledgements.} 
This work was partially supported by a grant from the Simons Foundation ($\#$395767 to Avraham Soffer). A. Soffer is partially supported by NSF grant DMS01600749.
M.-B Tran is partially supported by NSF Grant RNMS (Ki-Net) 1107291, ERC Advanced Grant DYCON. M.-B Tran would like to thank  Professor Linda Reichl and Professor Robert Dorfman  for fruiful discussions on the topic.

\section{Appendix: Proof of Lemma \ref{Lemma:Poincare} -  The Poincar\'e Inequality with inverse Bose-Einstein Distribution weight}
Let us first define
\begin{equation}\label{Lemma:Poincare:E1}
\omega \ = \ |p|,
\end{equation}
and 
\begin{equation}\label{Lemma:Poincare:E2}
F \ = \ \mathfrak{E}^{-1/2}\varphi,
\end{equation}
and take the gradient of the above function
\begin{equation*}
\begin{aligned}
\nabla F \ = & \ \nabla\omega \frac{e^\omega}{2(e^\omega-1)^{1/2}}\varphi \ + \ (e^\omega-1)^{1/2}\nabla \varphi,
\end{aligned}
\end{equation*}
which implies
\begin{equation*}
\begin{aligned}
(e^\omega-1)|\nabla \varphi|^2 \ = & \ \left|\nabla F \ -  \ \nabla\omega \frac{e^\omega}{2(e^\omega-1)^{1/2}}\varphi\right|^2 .
\end{aligned}
\end{equation*}
Expanding the right hand side of the above inequality yields
\begin{equation*}
\begin{aligned}
(e^\omega-1)|\nabla \varphi|^2 \ = & \ |\nabla F|^2 \ - \ \nabla F\nabla \omega \frac{\varphi e^\omega}{(e^\omega-1)^{1/2}}\ + \ |\varphi|^2|\nabla \omega|^2\frac{e^{2\omega}}{4(e^\omega-1)}.
\end{aligned}
\end{equation*}
Since $\omega=|p|$ and $e^{2\omega}\geq (e^\omega-1)^2$, the last term on the right hand side of the above inequality can be bounded as 
$$ |\varphi|^2|\nabla \omega|^2\frac{e^{2\omega}}{4(e^\omega-1)}\ \ge \  |\varphi|^2|\nabla \omega|^2\frac{(e^\omega-1)}{4},$$
which yields
\begin{equation}\label{Lemma:Poincare:E3}
\begin{aligned}
(e^\omega-1)|\nabla \varphi|^2 \ \ge  & \ |\nabla F|^2 \ - \ \nabla F\nabla \omega \frac{\varphi e^\omega}{(e^\omega-1)^{1/2}}\ + \ |\varphi|^2|\nabla \omega|^2\frac{(e^\omega-1)}{4}.
\end{aligned}
\end{equation}
Now, let us consider the second term on the right hand side of \eqref{Lemma:Poincare:E3}, that can be rewritten as
\begin{equation*}
\begin{aligned}
- \nabla F\nabla \omega \frac{\varphi e^\omega}{(e^\omega-1)^{1/2}} \ = &\ - \nabla F\nabla \omega \frac{F e^\omega}{e^\omega-1}\\
= &\ - \frac{1}{2}\nabla F^2\nabla \omega \frac{e^\omega}{e^\omega-1},
\end{aligned}
\end{equation*}
which, in combination with \eqref{Lemma:Poincare:E3}, yields
\begin{equation*}
\begin{aligned}
(e^\omega-1)|\nabla \varphi|^2 \ \ge  & \ |\nabla F|^2 \ - \ \frac{1}{2}\nabla F^2\nabla \omega \frac{e^\omega}{e^\omega-1}\ + \ |\varphi|^2|\nabla \omega|^2\frac{(e^\omega-1)}{4}.
\end{aligned}
\end{equation*}
Integrating both sides of the above inequality with respect to $p$ leads to the following inequality
\begin{equation}\label{Lemma:Poincare:E4}
\begin{aligned}
\int_{\mathbb{R}^3}(e^\omega-1)|\nabla \varphi|^2dp   \ge  &  \int_{\mathbb{R}^3}|\nabla F|^2dp \ - \ \frac{1}{2}\int_{\mathbb{R}^3}\nabla F^2\nabla \omega \frac{e^\omega}{e^\omega-1}dp\ + \ \int_{\mathbb{R}^3}|\varphi|^2|\nabla \omega|^2\frac{(e^\omega-1)}{4}dp\\
 \ge  &  \int_{\mathbb{R}^3}|\nabla F|^2dp  +  \frac{1}{2}\int_{\mathbb{R}^3} F^2\nabla\left(\nabla \omega \frac{e^\omega}{e^\omega-1}\right)dp +  \int_{\mathbb{R}^3}|\varphi|^2|\nabla \omega|^2\frac{(e^\omega-1)}{4}dp,
\end{aligned}
\end{equation}
where the last line follows from an integration by parts on the second term on the right hand side of the inequality. 
\\ Developing the second term on the right hand side of \eqref{Lemma:Poincare:E4}, we find 
\begin{equation}\label{Lemma:Poincare:E5}
\begin{aligned}
&\ \int_{\mathbb{R}^3}(e^\omega-1)|\nabla \varphi|^2dp   \\ 
 \ge  & \ \int_{\mathbb{R}^3}|\nabla F|^2dp \ + \ \frac{1}{2}\int_{\mathbb{R}^3} F^2\left(\Delta \omega \frac{e^\omega}{e^\omega-1} \ - \ \frac{|\nabla \omega|^2}{(e^\omega-1)^2}  \right)dp\ + \ \int_{\mathbb{R}^3}|\varphi|^2|\nabla \omega|^2\frac{(e^\omega-1)}{4}dp.
\end{aligned}
\end{equation}
By noting that $\Delta \omega =\frac{3}{|p|}$ and $|\nabla \omega|=1$, we deduce from \eqref{Lemma:Poincare:E4} that
\begin{equation}\label{Lemma:Poincare:E6}
\begin{aligned}
&\ \int_{\mathbb{R}^3}(e^\omega-1)|\nabla \varphi|^2dp   \\ 
 \ge  & \ \int_{\mathbb{R}^3}|\nabla F|^2dp \ + \ \frac{1}{2}\int_{\mathbb{R}^3} F^2\left(\frac{3}{|p|} \frac{e^{|p|}}{e^{|p|}-1} \ - \ \frac{e^{|p|}}{(e^{|p|}-1)^2}  \right)dp\ + \ \int_{\mathbb{R}^3}|\varphi|^2\frac{(e^\omega-1)}{4}dp\\ 
 \ge  &  \ \int_{\mathbb{R}^3}|\varphi|^2\frac{(e^\omega-1)}{4}dp.
\end{aligned}
\end{equation}

\bibliographystyle{plain}

 \bibliography{QuantumBoltzmann,QuantumBoltzmann1}

\end{document}